\newtheorem{thm}{Theorem}[section]
\newtheorem{proposition}[thm]{Proposition}
\newtheorem{corollary}[thm]{Corollary}
\newtheorem{lemma}[thm]{Lemma}
\theoremstyle{definition}
\newtheorem{example}[thm]{Example}
\newtheorem{remark}[thm]{Remark}
\newcommand{\md}{\mathcal{M}}
\newcommand{\cl}{\colon}
\newcommand{\dft}{\mathrm{d}}
\newcommand{\qqquad}{\qquad\quad}
\newcommand{\fmp}{\mathbf{F}}
\newcommand{\zsp}{\mathbb{Z}_{>0}}
\newcommand{\ts}{\mathrm{T}}
\DeclareMathOperator{\mat}{Mat}
\DeclareMathOperator{\GL}{GL}
\DeclareMathOperator{\End}{End}
\DeclareMathOperator{\id}{Id}
\newcommand{\mm}{\mathbf{M}}
\newcommand{\ybm}{\mathbf{Y}}
\newcommand{\pw}{p}
\newcommand{\ct}{c}
\newcommand{\ah}{\mathbf{a}}
\newcommand{\bh}{\mathbf{b}}
\newcommand{\ch}{\mathbf{c}}
\newcommand{\gmd}{\mathbf{G}}
\newcommand{\lb}{\label}
\newcommand{\er}{\eqref}
\title{Tetrahedron maps, Yang--Baxter maps, and partial linearisations}
\date{}
\author{S. Igonin\thanks{s-igonin@yandex.ru}}
\author{V. Kolesov\thanks{vadik.kolesov2015@yandex.ru}}
\author{S. Konstantinou-Rizos\thanks{skonstantin84@gmail.com}}
\author{M.M. Preobrazhenskaia\thanks{rita.preo@gmail.com}}
\affil{Centre of Integrable Systems, P.G. Demidov Yaroslavl State University, Yaroslavl, Russia}
\patchcmd{\subequations}{\alph{equation}}{\alphalph{\value{equation}}}{}{}
\begin{document}

\maketitle

\begin{abstract}

We study tetrahedron maps, which are set-theoretical solutions to the Zamolodchikov tetrahedron equation, 
and Yang--Baxter maps, which are set-theoretical solutions to the quantum Yang--Baxter equation.

In particular, we clarify the structure of the nonlinear algebraic relations which define 
linear (parametric) tetrahedron maps (with nonlinear dependence on parameters), and we present
several transformations which allow one to obtain new such maps from known ones.
Furthermore, we prove that the differential of a (nonlinear) tetrahedron map on a manifold is a tetrahedron map as well.
Similar results on the differentials of Yang--Baxter 
and entwining Yang--Baxter maps are also presented.

Using the obtained general results, we construct new examples of (parametric) 
Yang--Baxter and tetrahedron maps. 
The considered examples include maps associated with integrable systems and matrix groups.
In particular, we obtain a parametric family of new linear tetrahedron maps, 
which are linear approximations for the nonlinear tetrahedron map
constructed by Dimakis and M\"uller-Hoissen~\cite{Dimakis}
in a study of soliton solutions of vector Kadomtsev--Petviashvili (KP) equations.
Also, we present invariants for this nonlinear tetrahedron map.

\end{abstract}

\bigskip

\hspace{.2cm} \textbf{PACS numbers:} 02.30.Ik, 02.90.+p, 03.65.Fd.

\hspace{.2cm} \textbf{Mathematics Subject Classification 2020:} 16T25, 81R12.


\hspace{.2cm} \textbf{Keywords:} 
Zamolodchikov tetrahedron equation, quantum Yang--Baxter equation, 

\hspace{2.5cm} parametric tetrahedron maps, parametric Yang--Baxter maps, entwining Yang--Baxter

\hspace{2.5cm} maps, linearisations, differentials of maps.

\section{Introduction}
\lb{sintr}

The Zamolodchikov tetrahedron equation~\cite{Zamolodchikov,Zamolodchikov-2}
is a higher-dimensional analogue of the well-celebrated quantum Yang--Baxter equation.
They belong to the most fundamental equations in mathematical physics 
and have applications in many diverse branches of physics and mathematics, 
including statistical mechanics, quantum field theories, algebraic topology, 
and the theory of integrable systems. 
Some applications of the tetrahedron equation can be found 
in~\cite{Bazhanov-Sergeev,Bazhanov-Sergeev-2,DMH15,Doliwa-Kashaev,Gorbounov-Talalaev,Kapranov,Kashaev-Sergeev,Kassotakis-Tetrahedron,Sharygin-Talalaev,Nijhoff-2,Nijhoff,Yoneyama21} and references therein.

This paper is devoted to tetrahedron maps and Yang--Baxter maps, 
which are set-theoretical solutions to the tetrahedron equation 
and the Yang--Baxter equation, respectively.
Set-theoretical solutions of the Yang--Baxter equation have been intensively studied
by many authors after the work of Drinfeld~\cite{Drin92}.
Even before that, examples of such solutions were constructed by Sklyanin~\cite{skl88}.
A quite general construction for tetrahedron maps first appeared 
in the work of Korepanov~\cite{Korepanov-DS} in connection with 
integrable dynamical systems in discrete time.
Presently, the relations of tetrahedron maps and Yang--Baxter maps
with integrable systems (including PDEs and lattice equations) is a very active area of research 
(see, e.g.,~\cite{abs2003,atk14,Vincent,Dimakis,Doliwa-Kashaev,fordy17,Kashaev-Sergeev,Pavlos-Maciej-2,Kassotakis-Tetrahedron,KR,Sokor-Sasha,Korepanov-DS,Kouloukas2,pap-Tongas,Yoneyama21} and references therein).

This paper is organised as follows.

Section~\ref{preliminaries} contains the definitions  
of (parametric) tetrahedron maps, (parametric) Yang--Baxter maps
and recalls some basic properties of them.

Sections~\ref{sltm} and~\ref{slptm} are devoted to linear tetrahedron maps 
and to linear parametric tetrahedron maps with nonlinear dependence on parameters.
In particular, we clarify the structure 
of the nonlinear algebraic relations that define such maps, 
and we present several transformations which allow one to obtain new such maps from known ones.

The results of Sections~\ref{sltm},~\ref{slptm} on linear (parametric) tetrahedron maps
generalise some results of~\cite{BIKRP} on linear (parametric) Yang--Baxter maps.
Examples of linear parametric Yang--Baxter maps related to integrable PDEs of 
vector Kadomtsev--Petviashvili (KP) and (deformed) nonlinear Schr\"odinger (NLS) types
were discussed also in~\cite{Dimakis,Sokor-Sasha}.

Remark~\ref{rlappr}, Corollary~\ref{ctaaa}, and Examples~\ref{eeltr},~\ref{edmh} show how 
linear tetrahedron maps appear as linear approximations of nonlinear ones.

Hietarinta~\cite{Hietarinta97} considered some special linear tetrahedron maps.
A relation of our results with those of~\cite{Hietarinta97} is discussed in Remark~\ref{rhiet}.

We study also nonlinear Yang--Baxter and tetrahedron maps on manifolds 
and their differentials defined on the corresponding tangent bundles.
For a manifold~$\md$, its tangent bundle is denoted by~$T\md$.
When we consider maps of manifolds, 
we assume that they are either smooth, or complex-analytic, or rational,
so that the differential is defined for such a map.
Section~\ref{sdybt} contains the following results: 
\begin{itemize}
	\item For any Yang--Baxter map $Y\cl\md\times \md\to \md\times\md$,
the differential $\dft Y\cl T\md\times T\md\to T\md\times T\md$
is a Yang--Baxter map of the manifold $T\md\times T\md$.
A similar result is valid also for entwining Yang--Baxter maps.
\item For any tetrahedron map 
$\mathbf{T}\cl\md\times\md\times\md\to\md\times\md\times\md$, 
the differential 
$$
\dft\mathbf{T}\cl T\md\times T\md\times T\md\to T\md\times T\md\times T\md
$$ 
is a tetrahedron map of the manifold $T\md\times T\md\times T\md$.
\end{itemize}
The above result on the differential of a Yang--Baxter map was used (without proof) in~\cite{BIKRP}.

Examples of the differentials for tetrahedron maps are presented in Section~\ref{sdybt}.
The computed differentials are tetrahedron maps~\er{difelt},~\er{difkm}.
An example of a computation of the differentials for a family of Yang--Baxter maps 
is given in Section~\ref{sYBmvb}.

In Example~\ref{edmh} we consider the nonlinear birational tetrahedron map~\er{dmt}
which was constructed by Dimakis and M\"uller-Hoissen~\cite{Dimakis}
in a study of soliton solutions of vector KP equations.
We present invariants for this map and find for it a linear approximation, 
which is a family of new linear tetrahedron maps~\er{pmat2} 
depending on the parameter~$\ct\in\mathbb{C}$.

Generalising some constructions from~\cite{BIKRP},
in Section~\ref{sYBmvb} we present new examples of linear 
parametric Yang--Baxter maps (with nonlinear dependence on parameters)
associated with some matrix groups.
Let $\mathbb{K}$ be either $\mathbb{C}$ or $\mathbb{R}$.
For $n\in\zsp$, consider the matrix group~$\GL_n(\mathbb{K})$ 
and an abelian subgroup $\Omega\subset\GL_n(\mathbb{K})$.
The construction in Section~\ref{sYBmvb} involves the computation 
of the differential of a nonlinear Yang--Baxter map 
on~$\GL_n(\mathbb{K})\times\GL_n(\mathbb{K})$ 
and restricting the computed differential to the submanifold
$$
\big(\Omega\times\mat_n(\mathbb{K})\big)\times\big(\Omega\times\mat_n(\mathbb{K})\big)
$$
of the tangent bundle 
$$
T\big(\GL_n(\mathbb{K})\times\GL_n(\mathbb{K})\big)\cong
T\GL_n(\mathbb{K})\times T\GL_n(\mathbb{K})\cong 
\big(\GL_n(\mathbb{K})\times\mat_n(\mathbb{K})\big)
\times\big(\GL_n(\mathbb{K})\times\mat_n(\mathbb{K})\big)
$$ 
of the manifold $\GL_n(\mathbb{K})\times\GL_n(\mathbb{K})$.
Furthermore, we extend the obtained Yang--Baxter map to 
a family of Yang--Baxter maps depending on a nonzero constant $l\in\mathbb{K}$.

As a result, for any nonzero $l\in\mathbb{K}$, $\,n,\pw\in\zsp$, 
and any abelian subgroup $\Omega\subset\GL_n(\mathbb{K})$,
we obtain the parametric Yang--Baxter map~\er{lpyblg} with parameters $a,b\in\Omega$.
For $\pw\ge 2$ the map~\er{lpyblg} is new, to our knowledge.
For $\pw=1$ it was presented in~\cite{BIKRP}.

In the construction of~\eqref{lpyblg} we assume
that $\mathbb{K}$ is either $\mathbb{R}$ or $\mathbb{C}$, 
in order to use tangent spaces and differentials.
Furthermore, one can verify that~\eqref{lpyblg} is a parametric Yang--Baxter map 
for any field~$\mathbb{K}$.

Section~\ref{sconc} concludes the paper 
with comments on how the results of this paper can be extended.

\begin{remark}
According to Remarks~\ref{rnonlpar},~\ref{rdl}
many constructions of this paper involve Yang--Baxter and tetrahedron maps which 
are ``partly linear'' in the sense that the maps are linear with respect to some of the variables 
and nonlinear with respect to the other variables.
Thus, informally speaking, one can say that we deal with ``partial linearisations'' of 
Yang--Baxter and tetrahedron maps.
\end{remark}

\section{Preliminaries}\label{preliminaries}
\subsection{Tetrahedron maps}

For any set $S$ and $n\in\zsp$, we use the notation
$S^n=\underbrace{S\times S\times\dots\times S}_{n}$.

Let $W$ be a set. A \emph{tetrahedron map} is a map 
\begin{equation}
\notag
T\cl W^3\rightarrow W^3,\qquad
T(x,y,z)=(u(x,y,z),v(x,y,z),w(x,y,z)),\qquad x,y,z\in W,
\end{equation}
satisfying the (Zamolodchikov) \emph{tetrahedron equation}
\begin{equation}\label{Tetrahedron-eq}
    T^{123}\circ T^{145} \circ T^{246}\circ T^{356}=T^{356}\circ T^{246}\circ T^{145}\circ T^{123}.
\end{equation}
Here $T^{ijk}\cl W^6\rightarrow W^6$ for $i,j,k=1,\ldots,6$, $i<j<k$, is the map 
acting as $T$ on the $i$th, $j$th, $k$th factors 
of the Cartesian product $W^6$ and acting as identity on the remaining factors.
For instance,
$$
T^{246}(x,y,z,r,s,t)=(x,u(y,r,t),z,v(y,r,t),s,w(y,r,t)),\qqquad x,y,z,r,s,t\in W.
$$
The schematic interpretation of the tetrahedron equation is given in Figure \ref{fig-tetra}. Every line with a number $i=1,\ldots,6$, corresponds to one of six copies of the set~$W$, 
and every intersection point of lines $i$, $j$, $k$ corresponds to the map~$T^{ijk}$.

\begin{figure}[ht]
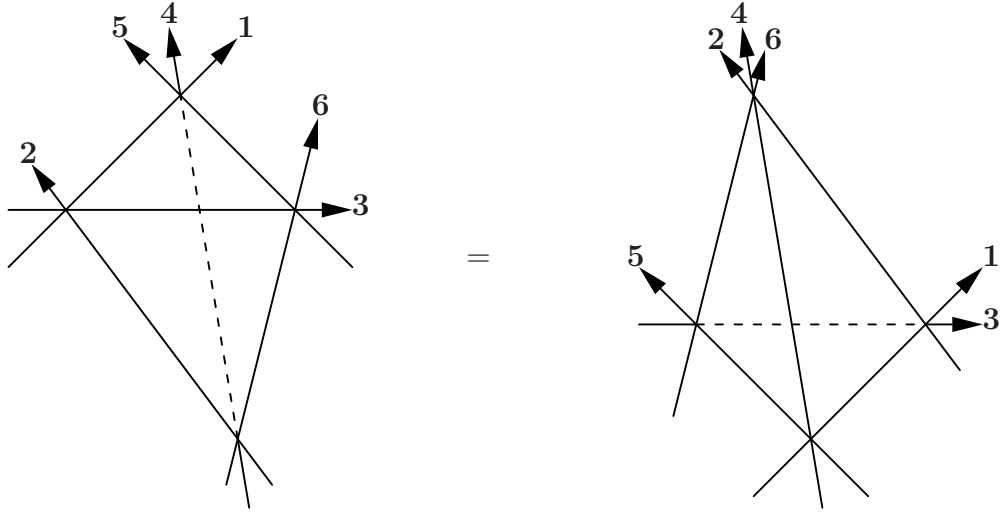

\centering
\centertexdraw{ 
\setunitscale 0.6
\move(-1.5 -0.5) \arrowheadtype t:F \avec(0.5 1.5)
\move(-1.5 0) \arrowheadtype t:F \avec(1.5 0)
\move(0.6 -2.6) \lvec(0.5 -2)
\move(0 1) \arrowheadtype t:F \avec(-0.1 1.6)
\lpatt(0.067 0.1) \move(0.5 -2) \lvec(0 1) 
\lpatt()
\move(1.5 -0.5) \arrowheadtype t:F \avec(-0.5 1.5)
\move(0.8 -2.4) \arrowheadtype t:F \avec(-1.3 0.4)
\move(0.4 -2.4) \arrowheadtype t:F \avec(1.2 0.8)

\htext (2.5 -.5) {$=$}
\htext (0.5 1.5) {\textbf{1}}
\htext (1.5 -.05) {\textbf{3}}
\htext (-0.17 1.62) {\textbf{4}}
\htext (-0.6 1.5) {\textbf{5}}
\htext (-1.4 0.4) {\textbf{2}}
\htext (1.15 0.82) {\textbf{6}}

\move(5 -2.5) \arrowheadtype t:F \avec(7 -0.5)
\move(4 -1) \lvec(4.5 -1) 
\lpatt(0.067 0.1)
\move(4.5 -1) \lvec(6.5 -1)
\lpatt()
\move(6.5 -1) \arrowheadtype t:F \avec(7 -1)
\move(5.6 -2.6) \arrowheadtype t:F \avec(4.9 1.6)
\move(6 -2.5) \arrowheadtype t:F \avec(4 -.5)
\move(6.8 -1.4) \arrowheadtype t:F \avec(4.7 1.4)
\move(4.3 -1.8) \arrowheadtype t:F \avec(5.1 1.4)

\htext (7 -0.5) {\textbf{1}}
\htext (7 -1.05) {\textbf{3}}
\htext (4.8 1.62) {\textbf{4}}
\htext (3.9 -.5) {\textbf{5}}
\htext (4.6 1.4) {\textbf{2}}
\htext (5.1 1.4) {\textbf{6}}
}

\caption{Schematic interpretation of the tetrahedron 
equation~\cite{Zamolodchikov-2, Kapranov, Kassotakis-Tetrahedron}.}\label{fig-tetra}
\end{figure}
\begin{proposition}[\cite{Kassotakis-Tetrahedron}]
\lb{pp13}
Consider the permutation map
\begin{gather*}
P^{13}\cl W^3\to W^3,\qqquad
P^{13}(a_1,a_2,a_3)=(a_3,a_2,a_1),\qquad a_i\in W.
\end{gather*}
If a map $T\cl W^3\to W^3$ satisfies the tetrahedron equation~\er{Tetrahedron-eq}
then $\tilde{T}=P^{13}\circ T\circ P^{13}$ obeys this equation as well.
\end{proposition}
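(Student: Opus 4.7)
The plan is to realise $\tilde T$ as a conjugate of $T$ by a single permutation of the six factors of $W^6$, so that the tetrahedron equation for $\tilde T$ will follow from that for $T$ by conjugating both sides.

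First I would introduce the involution $\sigma=(1\,6)(2\,5)\in S_6$ (fixing $3$ and $4$) and verify the combinatorial observation that $\sigma$ permutes the four index triples of~\er{Tetrahedron-eq} by the swaps $\{1,2,3\}\leftrightarrow\{3,5,6\}$ and $\{1,4,5\}\leftrightarrow\{2,4,6\}$, and that on each such triple $\{i,j,k\}$ with $i<j<k$ the sequence $(\sigma(i),\sigma(j),\sigma(k))$ is the \emph{decreasing} listing of the image $\sigma(\{i,j,k\})$. Next I would introduce the involution $P^\sigma\cl W^6\to W^6$ given by $P^\sigma(x_1,\dots,x_6)=(x_6,x_5,x_3,x_4,x_2,x_1)$.

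The central step is to establish, for each of the four triples $(i,j,k)$ occurring in~\er{Tetrahedron-eq}, the identity
\[
P^\sigma\circ T^{ijk}\circ P^\sigma=\tilde T^{abc},\qquad (a,b,c)=(\sigma(k),\sigma(j),\sigma(i)).
\]
Unwinding the definitions, $P^\sigma\circ T^{ijk}\circ P^\sigma$ is the map that reads the factors at positions $\sigma(i),\sigma(j),\sigma(k)$ in this order, applies $T$, and writes the output back into those positions in the same order; by the combinatorial observation above, these positions are precisely $a<b<c$, but read in the reversed order $c,b,a$. Reading in reverse order is exactly the effect of the inner and outer $P^{13}$ in the definition $\tilde T=P^{13}\circ T\circ P^{13}$, which is why the composition equals $\tilde T^{abc}$.

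Finally, conjugating both sides of~\er{Tetrahedron-eq} by $P^\sigma$ and invoking $(P^\sigma)^2=\id$, the four occurrences of $T^{ijk}$ on each side turn into the corresponding $\tilde T^{abc}$. Because $\sigma$ swaps the two pairs of triples, the sequence $(\{1,2,3\},\{1,4,5\},\{2,4,6\},\{3,5,6\})$ is sent to $(\{3,5,6\},\{2,4,6\},\{1,4,5\},\{1,2,3\})$, so the order of the four factors on each side is reversed, and the result is
\[
\tilde T^{356}\circ\tilde T^{246}\circ\tilde T^{145}\circ\tilde T^{123}=\tilde T^{123}\circ\tilde T^{145}\circ\tilde T^{246}\circ\tilde T^{356},
\]
which is the tetrahedron equation for $\tilde T$. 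The principal obstacle is the bookkeeping in the displayed identity of the preceding paragraph: one must carry out the four case checks and confirm that the order reversal produced by $\sigma$ matches the pair of $P^{13}$'s in $\tilde T$; once that is settled, the remainder is formal.
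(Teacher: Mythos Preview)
Your argument is correct. The paper itself does not supply a proof of this proposition; it is simply quoted from~\cite{Kassotakis-Tetrahedron}. Your conjugation approach via the involution $\sigma=(1\,6)(2\,5)\in S_6$ is a clean way to obtain the result: the combinatorial check that $\sigma$ swaps the two pairs of triples $\{1,2,3\}\leftrightarrow\{3,5,6\}$, $\{1,4,5\}\leftrightarrow\{2,4,6\}$ while reversing the order within each is exactly right, and the key identity $P^\sigma\circ T^{ijk}\circ P^\sigma=\tilde T^{\,\sigma(k)\,\sigma(j)\,\sigma(i)}$ holds in all four cases (your verbal justification is accurate, and one can verify it directly as you indicate). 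Conjugating~\er{Tetrahedron-eq} by~$P^\sigma$ and inserting $P^\sigma\circ P^\sigma=\id$ between consecutive factors then yields the tetrahedron equation for~$\tilde T$, with the two sides interchanged as you note.

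One cosmetic remark: in your definition $P^\sigma(x_1,\dots,x_6)=(x_6,x_5,x_3,x_4,x_2,x_1)$ you are using the convention $P^\sigma(x)_i=x_{\sigma(i)}$; since $\sigma$ is an involution this coincides with the other standard convention, so no ambiguity arises here, but in a write-up it is worth stating which convention you adopt.
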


\begin{proposition}[\cite{Kassotakis-Tetrahedron}]
\lb{pist}
Let $T\cl W^3\to W^3$ be a tetrahedron map.
Suppose that a map $\sigma\cl W\to W$ satisfies 
\begin{gather*}
(\sigma\times\sigma\times\sigma)\circ T\circ(\sigma\times\sigma\times\sigma)=T,
\qqquad \sigma\circ\sigma=\id.
\end{gather*}
Then
\begin{gather*}
\tilde{T}=(\sigma\times\id\times\sigma)\circ T\circ(\id\times\sigma\times\id),\qqquad
\hat{T}=(\id\times\sigma\times\id)\circ T\circ(\sigma\times\id\times\sigma)
\end{gather*}
are tetrahedron maps.
\end{proposition}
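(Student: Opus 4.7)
The plan is first to note that, under the hypothesis, $\tilde T=\hat T$, so that only one of the two identities needs to be checked, and then to verify the tetrahedron equation for $\tilde T$ by isolating the corresponding equation for $T$ inside suitable $\sigma$-conjugations on $W^6$. Write $\Sigma=\sigma\times\id\times\sigma$, $\bar\Sigma=\id\times\sigma\times\id$, and $\Pi=\sigma\times\sigma\times\sigma$; all three are involutions and $\Sigma\circ\bar\Sigma=\bar\Sigma\circ\Sigma=\Pi$. The hypothesis says $\Pi\circ T\circ\Pi=T$, so $T$ commutes with $\Pi$, and then
\[
\tilde T=\Sigma\circ T\circ\bar\Sigma=\Sigma\circ(\Pi\circ T\circ\Pi)\circ\bar\Sigma=(\Sigma\circ\Pi)\circ T\circ(\Pi\circ\bar\Sigma)=\bar\Sigma\circ T\circ\Sigma=\hat T.
\]

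To verify~\eqref{Tetrahedron-eq} for $\tilde T$, I will work on $W^6$ and denote by $\sigma_r$ the map that applies $\sigma$ in the $r$-th slot and the identity in the others; these mutually commute and each is an involution. In this notation $\tilde T^{ijk}=\sigma_i\sigma_k\circ T^{ijk}\circ\sigma_j$. The hypothesis says precisely that $T^{ijk}$ commutes with $\sigma_i\sigma_j\sigma_k$ on $W^6$; of course $T^{ijk}$ also commutes with every $\sigma_r$ for $r\notin\{i,j,k\}$. It follows that any $\sigma$-monomial $\prod_r\sigma_r^{a_r}$ commutes with $T^{ijk}$ whenever $a_i,a_j,a_k$ have a common parity --- a convenient sufficient criterion for the manipulations below. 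In particular, the product $\sigma_1\sigma_2\sigma_3\sigma_4\sigma_5\sigma_6$ commutes with every $T^{abc}$.

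Next, I will expand $\tilde T^{123}\circ\tilde T^{145}\circ\tilde T^{246}\circ\tilde T^{356}$, collect (using the commutativity of the $\sigma_r$) the $\sigma$-block that appears between each two successive $T^{abc}$'s, and slide each intermediate block out of the middle by writing it as a product of one factor commuting with the $T^{abc}$ on its left and one factor commuting with the $T^{abc}$ on its right; the parity criterion makes each such decomposition easy to produce, possibly after inserting a pair $\sigma_r^2=\id$. Once the three intermediate blocks have been expelled, the left-hand side reduces to
\[
(\sigma_2\sigma_5)\circ T^{123}\circ T^{145}\circ T^{246}\circ T^{356}\circ(\sigma_2\sigma_5).
\]
The analogous calculation applied to $\tilde T^{356}\circ\tilde T^{246}\circ\tilde T^{145}\circ\tilde T^{123}$ produces the same outer conjugation around $T^{356}\circ T^{246}\circ T^{145}\circ T^{123}$; any discrepancy between the two outer $\sigma$-monomials will be absorbed by multiplying by $\sigma_1\sigma_2\sigma_3\sigma_4\sigma_5\sigma_6$, which is central by the observation above. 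Equality of the two sides then follows from~\eqref{Tetrahedron-eq} applied to $T$.

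The main obstacle will be the combinatorial bookkeeping in the sliding step: at each of the three intermediate $\sigma$-blocks one must select a two-factor decomposition compatible with both neighbouring $T^{abc}$'s, and then track how the residual factors combine with the outer $\sigma$'s and with each other. The parity criterion makes the procedure mechanical, but the manipulations are tedious, and the final match between the two sides is not visible until the end of the reduction.
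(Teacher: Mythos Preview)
The paper does not supply its own proof of this proposition; it merely cites the result from~\cite{Kassotakis-Tetrahedron}. So there is nothing in the paper to compare your argument against, and your proposal stands or falls on its own merits.

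On those merits: the approach is correct. The preliminary observation that $\tilde T=\hat T$ follows exactly as you say, and your parity criterion for when a $\sigma$-monomial commutes with $T^{ijk}$ is valid and is the right tool. I checked the sliding procedure on both sides: the left-hand side indeed reduces to $\sigma_1\sigma_3\sigma_4\sigma_6\circ(T^{123}T^{145}T^{246}T^{356})\circ\sigma_1\sigma_3\sigma_4\sigma_6$, and the right-hand side reduces to the same outer conjugation around $T^{356}T^{246}T^{145}T^{123}$, so no appeal to the centrality of $\sigma_1\cdots\sigma_6$ is even needed at the end (though your remark that this product is central is correct and would reconcile your stated $\sigma_2\sigma_5$ form with the $\sigma_1\sigma_3\sigma_4\sigma_6$ form). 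The argument is therefore complete in outline; the only remaining work is the mechanical bookkeeping you already flagged. If you want a cleaner write-up, it helps to do the right-hand side using the alternative representation $\tilde T^{ijk}=\sigma_j\circ T^{ijk}\circ\sigma_i\sigma_k$ coming from $\tilde T=\hat T$, which makes the two reductions visibly mirror each other.
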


\subsection{Parametric tetrahedron maps}

Let $\Omega$ and $V$ be sets.
Here $\Omega$ is regarded as a set of parameters.
Consider a map of the form 
\begin{gather}
\lb{tomv}
T\colon (\Omega\times V)\times (\Omega\times V)\times (\Omega\times V)
\to
(\Omega\times V)\times (\Omega\times V)\times (\Omega\times V),\\
\notag
((\alpha,x),(\beta,y),(\gamma,z))\mapsto 
\left((\alpha,u((\alpha,x),(\beta,y),(\gamma,z))),(\beta,v((\alpha,x),(\beta,y),(\gamma,z))),
(\gamma,w((\alpha,x),(\beta,y),(\gamma,z)))\right),\\
\notag
\alpha,\beta,\gamma\in\Omega,\qquad x,y,z\in V,\qquad
u((\alpha,x),(\beta,y),(\gamma,z)),v((\alpha,x),(\beta,y),(\gamma,z)),w((\alpha,x),(\beta,y),(\gamma,z))\in V.
\end{gather}
Note that the map~\er{tomv} satisfies $\pi\circ T=\pi$ for the projection 
$\pi\colon (\Omega\times V)\times(\Omega\times V)\times(\Omega\times V)
\to\Omega\times\Omega\times\Omega$.

We set 
\begin{gather}
\lb{uvabg}
u_{\alpha,\beta,\gamma}(x,y,z)=u((\alpha,x),(\beta,y),(\gamma,z)),\qqquad 
v_{\alpha,\beta,\gamma}(x,y,z)=v((\alpha,x),(\beta,y),(\gamma,z))),\\
\lb{wabg}
w_{\alpha,\beta,\gamma}(x,y,z))=w((\alpha,x),(\beta,y),(\gamma,z)),\qqquad
\alpha,\beta,\gamma\in\Omega,\qquad x,y,z\in V.
\end{gather}

For the map~\er{tomv}, equation~\eqref{Tetrahedron-eq} with $W=\Omega\times V$ can be written as
\begin{equation}\label{Par-Tetrahedron-eq}
    T^{123}_{\alpha,\beta,\gamma}\circ T^{145}_{\alpha,\delta,\epsilon} 
		\circ T^{246}_{\beta,\delta,\zeta}\circ T^{356}_{\gamma,\epsilon,\zeta}=
		T^{356}_{\gamma,\epsilon,\zeta}\circ T^{246}_{\beta,\delta,\zeta}\circ 
		T^{145}_{\alpha,\delta,\epsilon}\circ T^{123}_{\alpha,\beta,\gamma}
		\qquad\text{for all }\,\alpha,\beta,\gamma,\delta,\epsilon,\zeta\in\Omega.
\end{equation}
The terms $T^{123}_{\alpha,\beta,\gamma}$, $T^{145}_{\alpha,\delta,\epsilon}$, 
$T^{246}_{\beta,\delta,\zeta}$, $T^{356}_{\gamma,\epsilon,\zeta}$ in~\er{Par-Tetrahedron-eq}
are maps $V^6\to V^6$ defined similarly to the terms in equation~\er{Tetrahedron-eq}, 
adding the parameters $\alpha$, $\beta$, $\gamma$, $\delta$, $\epsilon$, $\zeta$.
For instance,
$$
T^{246}_{\beta,\delta,\zeta}(x,y,z,r,s,t)=
(x,u_{\beta,\delta,\zeta}(y,r,t),z,v_{\beta,\delta,\zeta}(y,r,t),s,w_{\beta,\delta,\zeta}(y,r,t)),
\qqquad x,y,z,r,s,t\in V.
$$

We use the notation
\begin{gather}
\lb{gtabg}
T_{\alpha,\beta,\gamma}\cl V^3\to V^3,\qquad
T_{\alpha,\beta,\gamma}(x,y,z)=(u_{\alpha,\beta,\gamma}(x,y,z),v_{\alpha,\beta,\gamma}(x,y,z),w_{\alpha,\beta,\gamma}(x,y,z)),\\
\notag
\alpha,\beta,\gamma\in\Omega,\qquad x,y,z\in V.
\end{gather}
Thus, $T_{\alpha,\beta,\gamma}$ defined by~\er{gtabg} is a map $V^3\to V^3$ 
depending on parameters $\alpha,\beta,\gamma\in\Omega$.

Equation~\eqref{Par-Tetrahedron-eq} is called the 
\emph{parametric tetrahedron equation}.
The family of maps~\er{gtabg} 
is called a \emph{parametric tetrahedron map} if it satisfies 
equation~\eqref{Par-Tetrahedron-eq}.
Then we can say more briefly that $T_{\alpha,\beta,\gamma}$ is a parametric tetrahedron map.

\begin{remark}
\label{rnonlpar}
Thus, the parametric map $T_{\alpha,\beta,\gamma}$ 
defined by~\er{gtabg},~\er{uvabg},~\er{wabg} obeys 
the parametric tetrahedron equation~\er{Par-Tetrahedron-eq} 
if and only if the (nonparametric) map~\er{tomv} obeys
the tetrahedron equation~\er{Tetrahedron-eq}.

In Section~\ref{slptm} we consider the case when $V$ is a vector space
and for any values of $\alpha,\beta,\gamma$
the map $T_{\alpha,\beta,\gamma}\cl V^3\to V^3$ is linear.
Note that usually $\Omega$ is a subset of another vector space, 
and the dependence of~$T_{\alpha \beta \gamma}$ on the 
parameters $\alpha,\beta,\gamma\in\Omega$ is nonlinear.

Then one can say that in Section~\ref{slptm} we study 
tetrahedron maps of the form~\er{tomv} which are linear with respect to~$V$ 
and may be nonlinear with respect to~$\Omega$.
However, it is useful to keep $\alpha,\beta,\gamma$ 
as parameters and to work with~$T_{\alpha\beta\gamma}$ 
instead of~$T$ from~\er{tomv}.
\end{remark}

\subsection{Tetrahedron maps vs Yang--Baxter maps}

In this subsection, 
we recall some simple relations between (parametric) tetrahedron maps 
and (parametric) Yang--Baxter maps, which are defined below.
The results of Proposition~\ref{prop_YB_tetr} and Corollary~\ref{cptyb} 
are known and are very simple, but for completeness we present proofs for them. 
Proposition~\ref{prop_YB_tetr} and Corollary~\ref{cptyb} show 
that (parametric) Yang--Baxter maps can be regarded as a particular class 
of (parametric) tetrahedron maps.

Let $W$ be a set. A \emph{Yang--Baxter map} is a map 
$$
Y\colon W\times W\to W\times W,\qquad Y(x,y)=(u(x,y),v(x,y)),\qquad x,y\in W,
$$
satisfying the Yang--Baxter equation
\begin{equation}\label{eq_YB}
Y^{12}\circ Y^{13}\circ Y^{23}=Y^{23}\circ Y^{13}\circ Y^{12}.
\end{equation}
The terms $Y^{12}$, $Y^{13}$, $Y^{23}$ in~\eqref{eq_YB} are maps $W^3\to W^3$ defined as follows 
\begin{gather*}
Y^{12}(x,y,z)=\big(u(x,y),v(x,y),z\big),\qquad\quad
Y^{23}(x,y,z)=\big(x,u(y,z),v(y,z)\big),\\
Y^{13}(x,y,z)=\big(u(x,z),y,v(x,z)\big),\qquad\quad x,y,z\in W.
\end{gather*}

\begin{proposition}\label{prop_YB_tetr}
Let $Y\colon W^2\rightarrow W^2$ be a Yang--Baxter map. 
Then the maps
$$
Y^{23}\colon W^3\rightarrow W^3,\qqquad
Y^{12}\colon W^3\rightarrow W^3
$$ 
are tetrahedron maps.
\end{proposition}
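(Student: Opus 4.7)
The plan is to verify the tetrahedron equation for $T=Y^{23}$ (the argument for $T=Y^{12}$ is symmetric) by unwinding the definitions of the operators $T^{ijk}\colon W^6\to W^6$ and reducing the resulting identity to the Yang--Baxter equation for $Y$.

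First I would observe that since $T=Y^{23}$ acts as identity on the first factor of $W^3$, the embedded map $T^{ijk}\colon W^6\to W^6$ acts as identity on the $i$th factor and as $Y$ on the $j$th and $k$th factors of $W^6$. Concretely, this gives the identifications
\begin{equation*}
T^{123}=Y^{23},\qquad T^{145}=Y^{45},\qquad T^{246}=Y^{46},\qquad T^{356}=Y^{56},
\end{equation*}
as maps $W^6\to W^6$. Substituting these into the tetrahedron equation~\eqref{Tetrahedron-eq}, the claim becomes
\begin{equation*}
Y^{23}\circ Y^{45}\circ Y^{46}\circ Y^{56} \;=\; Y^{56}\circ Y^{46}\circ Y^{45}\circ Y^{23}.
\end{equation*}

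The key step is then to note that the operator $Y^{23}$, which touches only the second and third factors of $W^6$, commutes with each of $Y^{45}$, $Y^{46}$, $Y^{56}$, since they act on disjoint factors. Cancelling $Y^{23}$ from both sides reduces the identity to
\begin{equation*}
Y^{45}\circ Y^{46}\circ Y^{56} \;=\; Y^{56}\circ Y^{46}\circ Y^{45},
\end{equation*}
which is exactly the Yang--Baxter equation~\eqref{eq_YB} for $Y$ embedded on the factors $4,5,6$ of $W^6$ (with identity on factors $1,2,3$), hence is satisfied by hypothesis.

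For $T=Y^{12}$, the analogous identifications give $T^{123}=Y^{12}$, $T^{145}=Y^{14}$, $T^{246}=Y^{24}$, $T^{356}=Y^{35}$, and the operator $Y^{35}$ commutes with the other three; cancelling it reduces the tetrahedron equation to the Yang--Baxter relation for $Y$ on the factors $1,2,4$. The only real point requiring care is the bookkeeping needed to confirm that the indices $j,k$ appearing in $T^{ijk}$ do produce the claimed $Y^{jk}$ for $T=Y^{23}$, and the indices $i,j$ produce $Y^{ij}$ for $T=Y^{12}$; once the indexing is verified, the proof is immediate from the commutativity of disjoint-index operators and the Yang--Baxter equation.
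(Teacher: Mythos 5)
Your proposal is correct and is essentially the paper's own argument: the paper's identifications $T^{123}=Y^{23}\times\id_{W^3}$, $T^{145}=\id_{W^3}\times Y^{12}$, $T^{246}=\id_{W^3}\times Y^{13}$, $T^{356}=\id_{W^3}\times Y^{23}$ are exactly your $Y^{23}$, $Y^{45}$, $Y^{46}$, $Y^{56}$ viewed on $W^6$, and both proofs conclude by commuting the disjoint-index factor past the other three and invoking the Yang--Baxter equation on the remaining factors. Only the word ``cancelling'' should be read as composing the Yang--Baxter identity on factors $4,5,6$ (resp.\ $1,2,4$) with the commuting factor, so that no invertibility of $Y$ is required.
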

\begin{proof}
Let $T=Y^{23}$. We need to prove~\eqref{Tetrahedron-eq}.
Using the identity map $\id_{W^3}\cl W^3\to W^3$, one obtains
\begin{gather}
\notag
T^{123}=Y^{23}\times\id_{W^3},\qquad
T^{145}=\id_{W^3}\times Y^{12},\qquad
T^{246}=\id_{W^3}\times Y^{13},\qquad
T^{356}=\id_{W^3}\times Y^{23},\\
\lb{tt16}
T^{123}\circ T^{145} \circ T^{246}\circ T^{356}=
(Y^{23}\times\id_{W^3})\circ(\id_{W^3}\times(Y^{12}\circ Y^{13}\circ Y^{23})),\\
\lb{tt33}
\begin{gathered}
T^{356}\circ T^{246}\circ T^{145}\circ T^{123}=
(\id_{W^3}\times(Y^{23}\circ Y^{13}\circ Y^{12}))\circ(Y^{23}\times\id_{W^3})=\\
=(Y^{23}\times\id_{W^3})\circ(\id_{W^3}\times(Y^{23}\circ Y^{13}\circ Y^{12})).
\end{gathered}
\end{gather}
Since $Y$ satisfies \eqref{eq_YB}, from~\er{tt16},~\er{tt33} we derive~\eqref{Tetrahedron-eq} 
for $T=Y^{23}$. Similarly, one can prove~\eqref{Tetrahedron-eq} for $T=Y^{12}$.
\end{proof}

Let $\Omega$ and $V$ be sets. 
A \emph{parametric Yang--Baxter map} $Y_{\alpha,\beta}$ is a family of maps
\begin{gather}
\label{ParamYB}
Y_{\alpha,\beta}\colon V\times V\to V\times V,\qquad
Y_{\alpha,\beta}(x,y)=\big(u_{\alpha,\beta}(x,y),\,v_{\alpha,\beta}(x,y)\big),\qquad x,y\in V,
\qquad \alpha,\beta\in\Omega,
\end{gather}
depending on parameters $\alpha,\beta\in\Omega$ 
and satisfying the parametric Yang--Baxter equation
\begin{gather}
\label{pybeq}
Y^{12}_{\alpha,\beta}\circ Y^{13}_{\alpha,\gamma} \circ Y^{23}_{\beta,\gamma}=
Y^{23}_{\beta,\gamma}\circ Y^{13}_{\alpha,\gamma} \circ Y^{12}_{\alpha,\beta}\qquad\text{for all }\,\alpha,\beta,\gamma\in\Omega.
\end{gather}
The terms $Y^{12}_{\alpha,\beta}$, $Y^{13}_{\alpha,\gamma}$, $Y^{23}_{\beta,\gamma}$
in~\eqref{pybeq} are maps $V^3\to V^3$ given by
\begin{gather*}
Y^{12}_{\alpha,\beta}(x,y,z)=\big(u_{\alpha,\beta}(x,y),v_{\alpha,\beta}(x,y),z\big),\qquad\quad
Y^{23}_{\beta,\gamma}(x,y,z)=\big(x,u_{\beta,\gamma}(y,z),v_{\beta,\gamma}(y,z)\big),\\
Y^{13}_{\alpha,\gamma}(x,y,z)=\big(u_{\alpha,\gamma}(x,z),y,v_{\alpha,\gamma}(x,z)\big),\qqquad
x,y,z\in V.
\end{gather*}
A parametric Yang--Baxter map~\eqref{ParamYB} with parameters $\alpha,\beta$
can be interpreted as the following Yang--Baxter map~$\ybm$ without parameters
\begin{gather}
\label{ypsv}
\ybm\colon (\Omega\times V)\times (\Omega\times V)\to(\Omega\times V)\times (\Omega\times V),\quad
\ybm\big((\alpha,x),(\beta,y)\big)=\big((\alpha,u_{\alpha,\beta}(x,y)),
(\beta,v_{\alpha,\beta}(x,y))\big).
\end{gather}

\begin{corollary}\label{cptyb}
For any parametric Yang--Baxter map~\eqref{ParamYB}, the maps 
\begin{gather}
\lb{ttyy}
T_{\alpha,\beta,\gamma}=Y^{23}_{\beta,\gamma}\cl V^3\to V^3,\qqquad
\tilde{T}_{\alpha,\beta,\gamma}=Y^{12}_{\alpha,\beta}\cl V^3\to V^3
\end{gather}
are parametric tetrahedron maps.
\end{corollary}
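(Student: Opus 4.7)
My plan is to deduce both assertions from Proposition~\ref{prop_YB_tetr} via the standard passage between parametric Yang--Baxter maps and ordinary Yang--Baxter maps on the enlarged set $\Omega\times V$, together with the dictionary between parametric and non-parametric tetrahedron maps recorded in Remark~\ref{rnonlpar}. The idea is that the content of Corollary~\ref{cptyb} is entirely formal once those two equivalences are in place, so nothing new has to be computed.

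Concretely, I would first associate to $Y_{\alpha,\beta}$ the non-parametric map $\ybm$ on $W:=\Omega\times V$ defined by~\eqref{ypsv}. The parametric Yang--Baxter equation~\eqref{pybeq} for $Y_{\alpha,\beta}$ is equivalent to the ordinary Yang--Baxter equation~\eqref{eq_YB} for $\ybm$: both sides of~\eqref{eq_YB}, evaluated at a triple $((\alpha,x),(\beta,y),(\gamma,z))\in W^3$, preserve the projection onto $\Omega^3$, and their $V$-components are precisely the two sides of~\eqref{pybeq} with parameters $(\alpha,\beta,\gamma)$. Hence $\ybm$ is a Yang--Baxter map of~$W$.

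Next I would apply Proposition~\ref{prop_YB_tetr} to~$\ybm$, obtaining that $\ybm^{23}$ and $\ybm^{12}$ are tetrahedron maps of~$W^3=(\Omega\times V)^3$. Each of these maps has the shape~\eqref{tomv}: it preserves the $\Omega$-factors coordinate-wise, and its $V$-components depend only on (two of) the parameters. Remark~\ref{rnonlpar} then converts this non-parametric tetrahedron property into the parametric one, giving that $T_{\alpha,\beta,\gamma}=Y^{23}_{\beta,\gamma}$ and $\tilde T_{\alpha,\beta,\gamma}=Y^{12}_{\alpha,\beta}$ satisfy~\eqref{Par-Tetrahedron-eq}, as required.

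The only point that deserves explicit comment is that $T_{\alpha,\beta,\gamma}$ and $\tilde T_{\alpha,\beta,\gamma}$ have a trivial (dummy) dependence on one of the three parameters; this causes no difficulty, since~\eqref{Par-Tetrahedron-eq} is asked to hold for all $\alpha,\beta,\gamma,\delta,\epsilon,\zeta\in\Omega$ and is equally well satisfied by families that ignore some of these. If one prefers to bypass the passage through $\ybm$ entirely, the same conclusion can be read off directly from~\eqref{Par-Tetrahedron-eq} for $T_{\alpha,\beta,\gamma}=Y^{23}_{\beta,\gamma}$: the factor $T^{123}_{\alpha,\beta,\gamma}$ acts nontrivially only on positions $2,3$ of $V^6$, whereas $T^{145}_{\alpha,\delta,\epsilon}$, $T^{246}_{\beta,\delta,\zeta}$, $T^{356}_{\gamma,\epsilon,\zeta}$ act nontrivially only on positions $4,5,6$; on that $V^3$-block the three factors on each side reproduce precisely the two sides of~\eqref{pybeq} with parameters $(\delta,\epsilon,\zeta)$, and since the two blocks commute the equality follows at once. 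The argument for $\tilde T_{\alpha,\beta,\gamma}=Y^{12}_{\alpha,\beta}$ is analogous, with the roles of the two blocks of coordinates interchanged, and there is no genuine obstacle at any step.
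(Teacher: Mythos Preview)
Your proof is correct and follows essentially the same approach as the paper: pass to the nonparametric Yang--Baxter map~$\ybm$ on $W=\Omega\times V$ via~\eqref{ypsv}, apply Proposition~\ref{prop_YB_tetr} to conclude that $\ybm^{23}$ and $\ybm^{12}$ are tetrahedron maps, and then translate back to the parametric setting using the equivalence in Remark~\ref{rnonlpar}. Your write-up is simply more detailed than the paper's two-sentence version, and your alternative direct verification via the block decomposition of~$V^6$ is a valid supplementary argument not present in the paper.
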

\begin{proof}
Applying Proposition~\ref{prop_YB_tetr} to the (nonparametric) 
Yang--Baxter map $\ybm$ given by~\er{ypsv}, we see that $\ybm^{23}$, $\ybm^{12}$ 
are (nonparametric) tetrahedron maps. 
This is equivalent to the fact that \er{ttyy} are parametric tetrahedron maps.
\end{proof}

\section{Linear tetrahedron maps}
\label{sltm}

For any vector space $W$ we denote by $\End(W)$ 
the set of linear maps $W\to W$.

Let $V$ be a vector space over a field $\mathbb{K}$. 
Usually $\mathbb{K}$ is either $\mathbb{C}$ or $\mathbb{R}$. 
In this section we consider linear maps 
$T\colon V^3 \to V^3$ given by
\begin{equation}\label{tetrahedron-linear}
T\cl \begin{pmatrix}
x\\
y\\
z
\end{pmatrix} \mapsto \begin{pmatrix}
u\\
v\\
w
\end{pmatrix} = \begin{pmatrix}
     A&B&C  \\
     D&E&F  \\
     K&L&M
\end{pmatrix}
\begin{pmatrix}
x\\
y\\
z
\end{pmatrix},  \qquad x,y,z,u,v,w \in V,
\end{equation}
where $A,B,C,D,E,F,K,L,M \in \End(V)$.

\begin{remark}
If $V=\mathbb{K}^n$ for some $n\in\zsp$ then 
$A,B,C,D,E,F,K,L,M$ are $n\times n$ matrices.
\end{remark}

\begin{proposition}\label{Lin-Tetrahedron-Eqs-proposition}
A map $T \in \End(V^3)$ given by \eqref{tetrahedron-linear} satisfies 
the tetrahedron equation~\eqref{Tetrahedron-eq} if and only 
if the maps $A,B,C,D,E,F,K,L,M\in \End(V)$ in~\eqref{tetrahedron-linear} obey the following equations
\begin{subequations}\label{Linear-Tetrahedron-eqs}
    \begin{eqnarray}
    & DA=AD+BDA, \quad AB=BA+ABD, \quad  ED=DE+EDB, \quad  BE=EB+DBE,  &\label{LTE1} \\
    & LE=EL+FLE, \quad EF=FE+EFL, \quad ML=MLF+LM , \quad FM=MF+LFM, & \\
    & D=DD+EDA, \quad B=BB+ABE, \quad L=LL+MLE, \quad F=FF+EFM, &\label{LTE3} \\ 
    & K=DK+EKA+FKD+FLDA, \quad C=CB+ACE+BCL+ABFL, & \\
    & KA=AK+BKA+CKD+CLDA, \quad AC=CA+ACD+BCK+ABFK, & \\
    & KK+LKA+MKD+MLDA=0, \quad CC+ACF+BCM+ABFM=0, & \\
    & EA+DBD=AE+BDB, \quad ME+LFL=EM+FLF, & \\
    & K=KL+LKB+MKE+MLDB, \quad C=FC+DCF+ECM+DBFM, & \\
    & MK=KM+LKC+MKF+MLDC, \quad CM=MC+KCF+LCM+KBFM, & \\
    & FD+EFK=DF+EDC, \quad BL+CLE=LB+KBE, & \\
    & LD=DL+EKB+FKE+FLDB, \quad BF=FB+DCE+ECL+DBFL, & \\
    & LA+KBD=AL+BKB+CKE+CLDB, & \\
    & AF+BDC=FA+DCD+ECK+DBFK, & \\
    & MD+LFK=DM+EKC+FKF+FLDC, & \\
    & BM+CLF=MB+KCE+LCL+KBFL, & \\
    & MA-AM+KCD+LCK+KBFK=BKC+CKF+CLDC. &
    \end{eqnarray}
\end{subequations}
\end{proposition}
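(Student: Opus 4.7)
\medskip
\noindent\textbf{Proof plan.}
The proof is a direct expansion: we write each $T^{ijk}$ as a $6\times 6$ block matrix over $\End(V)$ and equate the two sides of~\er{Tetrahedron-eq} block-by-block.

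First I would record the block matrices of $T^{123}$, $T^{145}$, $T^{246}$, $T^{356}$. Since $T^{ijk}$ acts as~$T$ on the $i,j,k$-th factors and as the identity on the remaining three factors, its $6\times 6$ block matrix has the nine blocks $A,B,C,D,E,F,K,L,M$ placed at positions $(i,i),(i,j),(i,k),(j,i),(j,j),(j,k),(k,i),(k,j),(k,k)$, identity blocks at the three remaining diagonal positions, and zero blocks elsewhere. This reduces the tetrahedron equation to a purely matrix-algebraic statement over the (noncommutative) ring $\End(V)$.

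Next I would compute the two fourfold products
\[
P_L:=T^{123}T^{145}T^{246}T^{356}, \qquad P_R:=T^{356}T^{246}T^{145}T^{123}
\]
as $6\times 6$ block matrices and equate corresponding entries, obtaining the system~\er{Linear-Tetrahedron-eqs}. The converse is automatic: each listed relation is, after cancellation of common summands, exactly the identification of one block entry of $P_L$ with the corresponding entry of $P_R$, so once every block position has been matched we recover $P_L=P_R$, which by~\er{tetrahedron-linear} is the tetrahedron equation.

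The main obstacle is the sheer bookkeeping: each of the $36$ block entries of $P_L$ and $P_R$ is a priori a sum of $6^{3}=216$ triple products of blocks, and one must isolate which ones survive and collect them. Two structural observations cut the work considerably. First, three of the four factors in each product have a dominant identity/zero structure, so the vast majority of the $216$ summands vanish. Second, conjugation by the permutation $P^{13}$ from Proposition~\ref{pp13} corresponds on the block level to reversing the order of rows and columns, i.e.\ to the substitution
\[
A\leftrightarrow M,\quad B\leftrightarrow L,\quad C\leftrightarrow K,\quad D\leftrightarrow F,\quad E\leftrightarrow E,
\]
under which the whole system~\er{Linear-Tetrahedron-eqs} is visibly invariant (for instance $DA=AD+BDA$ is sent to $FM=MF+LFM$, and so on). Hence it suffices to derive half of the equations and obtain the rest by this involution; this also provides a convenient consistency check on the expansion, which in practice is best verified with the aid of a computer algebra system.
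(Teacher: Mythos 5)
Your proposal is correct and follows essentially the same route as the paper: a direct expansion of both sides of the tetrahedron equation for the linear map (your $6\times 6$ block-matrix products are just the paper's coefficient-matching in the variables $x,y,z,r,s,t$ written in matrix form), with the converse being the same automatic reversal of the identification. The only additions — organising the bookkeeping via block matrices and halving the work with the $P^{13}$ substitution $A\leftrightarrow M$, $B\leftrightarrow L$, $C\leftrightarrow K$, $D\leftrightarrow F$ — are sound but cosmetic refinements of the same computation.
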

\begin{proof}
This can be proved by substitution of $T$ in \eqref{tetrahedron-linear} 
to the tetrahedron equation~\eqref{Tetrahedron-eq}.
For any $x,y,z,r,s,t\in V$, from the left-hand side of~\eqref{Tetrahedron-eq} we obtain
\begin{eqnarray*}
    &(T^{123}\circ T^{145} \circ T^{246}\circ T^{356})(x,y,z,r,s,t)=(AAx+ABDy+ABEr+ABFKz+ABFLs+\\
    &+ABFMt+ACEs+ACFt+BAy+BBr+BCKz+BCLs+CAz+CBs+CCt,\\
    &DAx+BDBy+DBEr+DBFKz+DBFLs+DBFMt+DCDz+DCEs+DCFt+\\
    &+EAy+EBr+ECKz+ECLs+ECMt+FAz+FBs+FCt,\\
    &KAx+KBDy+KBEr+KBFKz+KBFLs+KBFMt+KCDz+KCEs+\\
    &+KCFt+LAy+LBr+LCKz+LCLs+LCMt+MAz+MBs+MCt,\\
    &Dx+EDy+EEr+EFKz+EFLs+EFMt+FDz+FEs+\\
    &+FFt,Kx+LDy+LEr+LFKz+LFLs+LFMt+MDz+MEs+MFt,\\
    &Ky+Lr+MKz+MLs+MMt),
\end{eqnarray*}
while the right-hand side of \eqref{Tetrahedron-eq} implies
\begin{eqnarray*}
     &(T^{356}\circ T^{246}\circ T^{145}\circ T^{123})(x,y,z,r,s,t)=(AAx+ABy+ACz+Br+Cs,\\
     &ADx+AEy+AFz+BDAx+BDBy+BDCz+BEr+BFs+Ct,\\
     &AKx+ALy+AMz+BKAx+BKBy+BKCz+BLr+BMs+CKDx+\\
     &+CKEy+CKFz+CLDAx+CLDBy+CLDCz+CLEr+CLFs+CMt,\\
     &DDx+DEy+DFz+EDAx+EDBy+EDCz+EEr+EFs+Ft,\\
     &DKx+DLy+DMz+EKAx+EKBy+EKCz+ELr+EMs+FKDx+FKEy+\\
     &+FKFz+FLDAx+FLDBy+FLDCz+FLEr,FLFs+FMt,\\
     &KKx+KLy+KMz+LKAx+LKBy+LKCz+LLr+LMs+MKDx+MKEy+\\
     &+MKFz+MLDAx+MLDBy+MLDCz+MLEr+MLFs+MMt).
\end{eqnarray*}
By equating the coefficients of $x,y,z,r,s,t$ for each component of these vectors, 
we derive a system of relations equivalent to \eqref{Tetrahedron-eq}. 
For instance, consider the coefficients of~$y$ in the first components:
\[ABD + BA = AB.\]
This is the second equation from~\eqref{LTE1}.
Performing the same actions with all variables and components of the obtained vectors, 
we get all of relations~\eqref{Linear-Tetrahedron-eqs}.
\end{proof}

\begin{corollary}
System \eqref{Linear-Tetrahedron-eqs} implies the matrix equations
\eqref{lin-tet-matrix-eq1}--\eqref{lin-tet-matrix-eq4}
\begin{subequations}
\lb{mteq}
\begin{align}
    &\begin{pmatrix}
        D & E  \\
        A & B
    \end{pmatrix}
    \begin{pmatrix}
        D & BE  \\
        DA & B
    \end{pmatrix}=
    \begin{pmatrix}
        D & BE  \\
        DA & B
    \end{pmatrix},\label{lin-tet-matrix-eq1}\\
    &\begin{pmatrix}
        L & M  \\
        E & F
    \end{pmatrix}
    \begin{pmatrix}
        L & FM  \\
        LE & F
    \end{pmatrix}=
   \begin{pmatrix}
        L & FM  \\
        LE & F
    \end{pmatrix},\\
    &\begin{pmatrix}
        AB & B  \\
        D & ED
    \end{pmatrix}
    \begin{pmatrix}
        D & E  \\
        A & B
    \end{pmatrix}=
    \begin{pmatrix}
        AB & B  \\
        D & ED
    \end{pmatrix},\\
    &\begin{pmatrix}
        EF & F  \\
        L & ML
    \end{pmatrix}
    \begin{pmatrix}
        L & M  \\
        E & F
    \end{pmatrix}=
    \begin{pmatrix}
         EF & F  \\
         L & ML
    \end{pmatrix},\label{lin-tet-matrix-eq4}
\end{align}
\end{subequations}
as well as the following
\begin{subequations} 
\notag
\begin{align}
    [E-BED,A]=[BD,D+B],& \qquad [A-DAB,D]=[DB,D+B], \\
    [M-FML,E]=[FM,M+F],& \qquad  [E-LEF,M]=[MF,M+F], \\
    [B+D-DB,E]=0,& \qquad [B+D-BD,A]=0, \\
    [L+F-LF,M]=0,& \qquad  [L+F-FL,E]=0, \\
    [E,FK+CL-KB-DC] &+ [F+L,D+B]+[DB,FL]=0,\label{lin-tet-commutator-eq5}
\end{align}
\end{subequations}
where by $\left[\cdot,\cdot\right]$ we denote the commutator $\left[A,B\right]=AB-BA$.
\end{corollary}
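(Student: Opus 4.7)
The plan is to prove the corollary by direct algebraic verification, extracting the block-matrix identities and the commutator identities from the scalar relations in \eqref{Linear-Tetrahedron-eqs} one by one.

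First, I would tackle the four block-matrix identities \eqref{lin-tet-matrix-eq1}--\eqref{lin-tet-matrix-eq4}. For each, expand the left-hand product block by block, obtaining four scalar identities in $\End(V)$, then match them against \eqref{Linear-Tetrahedron-eqs}. For example, for \eqref{lin-tet-matrix-eq1} the top-left entry gives $DD+EDA=D$, which is the first equation in \eqref{LTE3}; the bottom-left gives $AD+BDA=DA$ from \eqref{LTE1}; the top-right gives $DBE+EB=BE$, which is equivalent to $BE=EB+DBE$ from \eqref{LTE1}; and the bottom-right gives $ABE+BB=B$ from \eqref{LTE3}. The remaining three matrix equations are verified by the same procedure: each block entry coincides up to transposition of summands with one of the relations in \eqref{Linear-Tetrahedron-eqs}, and the correspondence is symmetric under the substitution $(A,B,D,E)\leftrightarrow (E,F,L,M)$ used throughout the system.

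For the commutator identities, my strategy is to rewrite each commutator in expanded form and substitute. For instance, to prove $[E-BED,A]=[BD,D+B]$ I expand both sides as $EA-AE-BEDA+ABED$ on the left and $BDD+BDB-DBD-BBD$ on the right. The equation $EA+DBD=AE+BDB$ from the system provides $EA-AE=BDB-DBD$, so the identity reduces to $ABED-BEDA=BDD-BBD$. Now $D=DD+EDA$ and $B=BB+ABE$ from \eqref{LTE3} give $BDD=BD-BEDA$ and $BBD=BD-ABED$, and subtracting produces exactly $ABED-BEDA$. The identities $[B+D-DB,E]=0$, $[B+D-BD,A]=0$, and their $(E,F,L,M)$-counterparts are obtained by adding pairs of equations from \eqref{LTE1}--\eqref{LTE3} (e.g.\ combining $ED=DE+EDB$ with $BE=EB+DBE$ cancels the mixed terms and yields the commutator form). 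The $[A-DAB,D]$ and $[E-LEF,M]$ identities follow similarly.

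The main obstacle will be \eqref{lin-tet-commutator-eq5}, the identity $[E,FK+CL-KB-DC]+[F+L,D+B]+[DB,FL]=0$, since it couples many of the longer equations in the system (those involving $K,C,L,F$ on the left and $A,M$ on the right). The plan is to expand all three commutators, collect the resulting 24 terms, and reduce them using the equations $LE=EL+FLE$, $EF=FE+EFL$, $BL+CLE=LB+KBE$, $FD+EFK=DF+EDC$ and their companions, together with the previously used substitutions $BE-EB=DBE$ and $DA-AD=BDA$. I expect bookkeeping to be the only real difficulty; no conceptual step is required beyond patient substitution. Once this is checked, all asserted identities follow, which completes the corollary.
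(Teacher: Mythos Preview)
Your proposal is correct and is precisely the approach the paper (implicitly) takes: the corollary is stated without proof, as a routine consequence of \eqref{Linear-Tetrahedron-eqs} obtained by entrywise expansion of the block-matrix identities and direct algebraic combination for the commutator relations. Your sample verifications (e.g.\ for \eqref{lin-tet-matrix-eq1} and for $[E-BED,A]=[BD,D+B]$) are accurate, and the plan for \eqref{lin-tet-commutator-eq5} is the appropriate one; only bookkeeping remains.
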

\begin{remark}
\label{requiv}
Equations~\eqref{mteq} are equivalent to~\eqref{LTE1}--\eqref{LTE3}.
Thus, equations~\eqref{LTE1}--\eqref{LTE3}
can be replaced by equations~\eqref{mteq}, which have more clear structure.
\end{remark}

\begin{proposition}
\lb{pinvt}
For any vector space $V$,
the set of linear tetrahedron maps \eqref{tetrahedron-linear} is invariant 
under the following transformations
\begin{gather}
\lb{tcsym}
\begin{pmatrix}
         A&B&C  \\
         D&E&F  \\
         K&L&M
\end{pmatrix}\mapsto
\begin{pmatrix}
         M&L&K  \\
         F&E&D  \\
         C&B&A
\end{pmatrix},\\
\lb{tchsn}
\begin{pmatrix}
         A&B&C  \\
         D&E&F  \\
         K&L&M
\end{pmatrix}\mapsto
\begin{pmatrix}
         -A&B&-C  \\
         D&-E&F  \\
         -K&L&-M
\end{pmatrix}.
\end{gather}

Let $V=\mathbb{K}^n$ for some $n\in\zsp$.
Then $A,B,C,D,E,F,K,L,M$ in~\eqref{tetrahedron-linear} are $n\times n$ matrices.
In this case, the set of linear tetrahedron maps~\eqref{tetrahedron-linear} 
is invariant also under the transformation
\begin{gather}
\lb{trtr}
\begin{pmatrix}
         A&B&C  \\
         D&E&F  \\
         K&L&M
\end{pmatrix}\mapsto
\begin{pmatrix}
         A&B&C  \\
         D&E&F  \\
         K&L&M
\end{pmatrix}^\ts=
\begin{pmatrix}
         A^\ts&D^\ts&K^\ts  \\
         B^\ts&E^\ts&L^\ts  \\
         C^\ts&F^\ts&M^\ts
 \end{pmatrix},
\end{gather}
where $\ts$ denotes the transpose operation for matrices.
\end{proposition}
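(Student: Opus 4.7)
The three invariance claims reduce to facts already on the page: the first two come from Propositions~\ref{pp13} and~\ref{pist}, while the third comes from the behaviour of the ordinary matrix transpose under the tetrahedron equation. In each case, after choosing the correct involution or operation, the argument is a short block-matrix computation.

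For \eqref{tcsym}, I would compute $P^{13}\circ T\circ P^{13}$ directly. Using $P^{13}(x,y,z)=(z,y,x)$ together with $T(x,y,z)=(Ax+By+Cz,\,Dx+Ey+Fz,\,Kx+Ly+Mz)$, one reads off immediately the block matrix that appears in \eqref{tcsym}; Proposition~\ref{pp13} then supplies the tetrahedron property. For \eqref{tchsn}, I would apply Proposition~\ref{pist} with the involution $\sigma=-\id_V$. Both hypotheses are trivial: $\sigma\circ\sigma=\id$ is obvious, and the required identity $(\sigma\times\sigma\times\sigma)\circ T\circ(\sigma\times\sigma\times\sigma)=T$ follows from the linearity of $T$. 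A one-line computation of $\tilde T=(\sigma\times\id\times\sigma)\circ T\circ(\id\times\sigma\times\id)$ then yields exactly the sign pattern displayed in \eqref{tchsn}.

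For the transpose transformation \eqref{trtr}, the key observation is that, viewed as a $3n\times 3n$ matrix with $n\times n$ blocks, the block-transpose on the right of \eqref{trtr} is precisely the ordinary matrix transpose of $T$. For any $\{i,j,k\}\subset\{1,\dots,6\}$ with $i<j<k$, the extension $T\mapsto T^{ijk}$ is realised by placing the nine blocks of $T$ into the $(i,i),(i,j),\dots,(k,k)$ slots of the $6n\times 6n$ block matrix and putting identities in the remaining diagonal slots. A direct comparison block-by-block shows $(T^{ijk})^\ts=(T^\ts)^{ijk}$. Combining this with the anti-homomorphism property $(X\circ Y)^\ts=Y^\ts\circ X^\ts$, taking the transpose of both sides of the tetrahedron equation~\eqref{Tetrahedron-eq} for $T$ reverses each composition and produces exactly the tetrahedron equation for $T^\ts$; hence $T^\ts$ is a tetrahedron map.

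The main obstacle is essentially bookkeeping: one must verify carefully that the ``block transpose'' in \eqref{trtr} coincides with the usual $3n\times 3n$ transpose, and that this transpose commutes with the extension $T\mapsto T^{ijk}$. Once those identifications are in place, each invariance is immediate. As an alternative for \eqref{trtr}, one could verify directly that the algebraic system \eqref{Linear-Tetrahedron-eqs} is stable under the substitution $A\mapsto A^\ts$, $B\leftrightarrow D^\ts$, $C\leftrightarrow K^\ts$, $F\leftrightarrow L^\ts$, $E\mapsto E^\ts$, $M\mapsto M^\ts$, but the matrix-transpose argument above avoids the case analysis entirely.
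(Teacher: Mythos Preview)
Your proposal is correct and follows exactly the paper's own proof: invoke Proposition~\ref{pp13} for~\eqref{tcsym}, Proposition~\ref{pist} with $\sigma=-\id_V$ for~\eqref{tchsn}, and transpose both sides of~\eqref{Tetrahedron-eq} for~\eqref{trtr}. You simply spell out more of the block computations (e.g.\ the identity $(T^{ijk})^\ts=(T^\ts)^{ijk}$ and the anti-homomorphism of $\ts$) than the paper does, and you note an alternative route via the system~\eqref{Linear-Tetrahedron-eqs}, but the underlying argument is identical.
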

\begin{proof}
The statement about the transformation~\er{tcsym} 
follows from Proposition~\ref{pp13} with $W=V$.

The case of the transformation~\er{tchsn} 
follows from Proposition~\ref{pist},
if we take $W=V$ and consider the map $\sigma\cl V\to V$, $\sigma(v)=-v$.

To prove the statement about the transformation~\er{trtr},
one can apply the transpose operation 
to both sides of the tetrahedron equation~\er{Tetrahedron-eq}
for $T$ given by~\er{tetrahedron-linear}.
\end{proof}

\begin{example}
\lb{eltmp}
Let $V=\mathbb{K}^2$. Then $V^3=\mathbb{K}^6$.
Let $\ct\in\mathbb{K}$, $\ct\neq 0$.
Consider the linear map $T\in\End(V^3)$ given by~\er{tetrahedron-linear}
with the matrix
\begin{gather}
\lb{abcp}
\begin{pmatrix}
     A&B&C  \\
     D&E&F  \\
     K&L&M
\end{pmatrix}=
\begin{pmatrix} 
1 & 0 & \frac{\ct-1}{\ct} & \frac{(\ct-1)^2 (\ct+1)}{\ct} & -\frac{\ct-1}{\ct} & 1-\ct \\
 0 & 1 & 0 & 1-\ct & 0 & 0 \\
 0 & 0 & \frac{1}{\ct} & -\frac{(\ct-1)^2 (\ct+1)}{\ct} & \frac{\ct-1}{\ct} & \ct-1 \\
 0 & 0 & 0 & \ct & 0 & 0 \\
 0 & 0 & 0 & 1-\ct & 1 & 0 \\
 0 & 0 & 0 & 1-\ct & 0 & 1
\end{pmatrix}.
\end{gather}
Thus, for~\er{abcp} we have
\begin{gather}
\lb{pabc}
\begin{gathered}
A=\begin{pmatrix} 1 & 0 \\ 0 & 1\end{pmatrix},\quad
B=\begin{pmatrix} \frac{\ct-1}{\ct} & \frac{(\ct-1)^2 (\ct+1)}{\ct} \\ 0 & 1-\ct\end{pmatrix},\quad
C=\begin{pmatrix} -\frac{\ct-1}{\ct} & 1-\ct \\ 0 & 0\end{pmatrix},\quad
D=\begin{pmatrix} 0 & 0 \\ 0 & 0\end{pmatrix},\\
E=\begin{pmatrix} \frac{1}{\ct} & -\frac{(\ct-1)^2 (\ct+1)}{\ct} \\ 0 & \ct\end{pmatrix},\,\,\,\ 
F=\begin{pmatrix} \frac{\ct-1}{\ct} & \ct-1 \\ 0 & 0\end{pmatrix},\,\,\,\ 
K=\begin{pmatrix} 0 & 0 \\ 0 & 0\end{pmatrix},\,\,\,\ 
L=\begin{pmatrix} 0 & 1-\ct \\ 0 & 1-\ct\end{pmatrix},\,\,\,\ 
M=\begin{pmatrix} 1 & 0 \\ 0 & 1 \end{pmatrix}.
\end{gathered}
\end{gather}
Using Proposition~\ref{Lin-Tetrahedron-Eqs-proposition},
one can verify that \er{abcp} is a tetrahedron map. 

As explained in Example~\ref{edmh}, we have derived this linear tetrahedron map, 
using the differential of a nonlinear tetrahedron map from~\cite{Dimakis}.
Applying the transformations~\er{tcsym}, \er{tchsn}, \er{trtr} 
and their compositions to~\er{abcp}, we obtain several more linear tetrahedron maps.
\end{example}

\begin{proposition}\label{par-family}
Let $T_1$, $T_2$ be linear tetrahedron maps of the form
\begin{equation*}
    T_1=\begin{pmatrix}
     A&B&0  \\
     D&E&0  \\
     0&0&M
\end{pmatrix},\qqquad
    T_2=\begin{pmatrix}
     \tilde{A}&0&0  \\
     0&\tilde{E}&\tilde{F}\\
     0&\tilde{L}&\tilde{M}
\end{pmatrix}.
\end{equation*}
Let $l,m\in\mathbb{K}$, $l\neq 0$. Then 
\begin{equation*}
    T_1^{l,m}=\begin{pmatrix}
     lA&B&0  \\
     D&l^{-1}E&0  \\
     0&0&mM
\end{pmatrix},\qqquad
    T_2^{l,m}=\begin{pmatrix}
     m\tilde{A}&0&0  \\
     0&l\tilde{E}&\tilde{F}\\
     0&\tilde{L}&l^{-1}\tilde{M}
\end{pmatrix}
\end{equation*}
are linear tetrahedron maps as well.
\end{proposition}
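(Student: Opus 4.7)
The plan is to verify directly that $T_1^{l,m}$ and $T_2^{l,m}$ satisfy all the relations of Proposition \ref{Lin-Tetrahedron-Eqs-proposition}, exploiting the observation that the block structures of $T_1$ and $T_2$ reduce \eqref{Linear-Tetrahedron-eqs} to a system of equations that is homogeneous under the prescribed rescalings.

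For $T_1$, I would first walk through \eqref{Linear-Tetrahedron-eqs} and note that the vanishing entries $C = F = K = L = 0$ kill every equation except the following: the four relations of \eqref{LTE1} involving $A, B, D, E$, together with $D = D^2 + EDA$ and $B = B^2 + ABE$ from \eqref{LTE3}, together with $EA + DBD = AE + BDB$; plus the four commutation conditions $[M, A] = [M, B] = [M, D] = [M, E] = 0$, which arise from equations such as $ME + LFL = EM + FLF$, $\,MD + LFK = DM + EKC + FKF + FLDC$, $\,BM + CLF = MB + KCE + LCL + KBFL$, and $MA - AM + KCD + LCK + KBFK = BKC + CKF + CLDC$ once the vanishing terms are removed.

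Next, I would assign the $\mathbb{Z}$-grading $\deg A = +1$, $\deg E = -1$, $\deg B = \deg D = 0$ on words in the generators. A term-by-term check shows that each of the seven surviving $A,B,D,E$-relations is homogeneous: for example both sides of $D = D^2 + EDA$ have net degree $0$, every term of $DA = AD + BDA$ has net degree $+1$, and every term of $EA + DBD = AE + BDB$ has net degree $0$. Consequently the substitution $A \mapsto lA$, $E \mapsto l^{-1}E$ (with $B, D$ unchanged) scales both sides of each equation by the same power of $l$ and hence preserves it. The four commutation conditions are preserved under any further rescaling $M \mapsto mM$, since $M$ appears linearly in each term and both sides of $[M, X] = 0$ acquire the same factor for $X \in \{A, B, D, E\}$. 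This establishes $T_1^{l, m}$ as a tetrahedron map.

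The argument for $T_2^{l, m}$ is entirely symmetric: the zeros $\tilde{B} = \tilde{C} = \tilde{D} = \tilde{K} = 0$ leave only relations among $\tilde{E}, \tilde{F}, \tilde{L}, \tilde{M}$ (the four from the second line of \eqref{LTE1}, together with $L = L^2 + MLE$, $F = F^2 + EFM$, and $ME + LFL = EM + FLF$), all homogeneous under $\deg \tilde{E} = +1$, $\deg \tilde{M} = -1$, $\deg \tilde{F} = \deg \tilde{L} = 0$, plus the commutations $[\tilde{A}, \tilde{E}] = [\tilde{A}, \tilde{F}] = [\tilde{A}, \tilde{L}] = [\tilde{A}, \tilde{M}] = 0$. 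The rescalings $\tilde{E} \mapsto l\tilde{E}$, $\tilde{M} \mapsto l^{-1}\tilde{M}$, $\tilde{A} \mapsto m\tilde{A}$ preserve all of these by the same reasoning. The main obstacle is really only the bookkeeping required to list which relations survive under each block structure; once that is done, the degree-counting argument is immediate.
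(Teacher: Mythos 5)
Your proof is correct and follows essentially the same route as the paper, which simply asserts that the relations of Proposition~\ref{Lin-Tetrahedron-Eqs-proposition} satisfied by $T_i$ are again satisfied by $T_i^{l,m}$; you supply the details the paper omits, correctly identifying which equations of \eqref{Linear-Tetrahedron-eqs} survive each block structure and checking their invariance under the rescaling via a homogeneity argument. No gaps.
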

\begin{proof}
For each $i=1,2$, the fact that $T_i$ obeys equations~\er{Linear-Tetrahedron-eqs} 
implies that $T_i^{l,m}$ obeys these equations as well.
\end{proof}

\begin{remark}
\lb{rhiet}
Hietarinta~\cite{Hietarinta97} studied some special linear tetrahedron maps.
In our notation, Hietarinta~\cite{Hietarinta97} assumes that 
$A,B,C,D,E,F,K,L,M$ in~\er{tetrahedron-linear} belong to a commutative ring.
The assumption that $A,B,C,D,E,F,K,L,M$ commute simplifies 
equations~\er{Linear-Tetrahedron-eqs} very considerably, 
and this simplified version of~\er{Linear-Tetrahedron-eqs} appears in~\cite{Hietarinta97}.

Results of~\cite{Hietarinta97} imply that for any $\ah,\bh,\ch\in\mathbb{K}$
the following matrix determines a linear tetrahedron map 
$\mathbb{K}^3\to\mathbb{K}^3$
\begin{gather}
\lb{hietm}
\begin{pmatrix}
     A&B&C  \\
     D&E&F  \\
     K&L&M
\end{pmatrix}=
\begin{pmatrix}
     \ah&1-\ah\bh&0  \\
     0&\bh&0  \\
     0&1-\bh\ch&\ch
\end{pmatrix}
\end{gather}
Hietarinta~\cite{Hietarinta97} studied the case when 
$\ah$, $\bh$, $\ch$ are elements of a commutative ring.
\end{remark}

\section{Linear parametric tetrahedron maps}
\label{slptm}

Let $V$ be a vector space over a field $\mathbb{K}$. Let $\Omega$ be a set. 
In this section we study linear maps $T_{\alpha\beta\gamma} \in \End(V^3)$ 
depending on parameters $\alpha,\beta,\gamma \in \Omega$. 
We consider a linear map
\begin{equation}
\lb{tabg}
T_{\alpha \beta \gamma}\cl \begin{pmatrix}
    x\\
    y\\
    z
\end{pmatrix} \mapsto \begin{pmatrix}
    u\\
    v\\
    w
\end{pmatrix} = \begin{pmatrix}
     A_{\alpha \beta \gamma}&B_{\alpha \beta \gamma}&C_{\alpha \beta \gamma}  \\
     D_{\alpha \beta \gamma}&E_{\alpha \beta \gamma}&F_{\alpha \beta \gamma}  \\
     K_{\alpha \beta \gamma}&L_{\alpha \beta \gamma}&M_{\alpha \beta \gamma}
\end{pmatrix}
\begin{pmatrix}
    x\\
    y\\
    z
\end{pmatrix}, \qquad x,y,z,u,v,w \in V,
\end{equation}
where $A_{\alpha \beta \gamma},B_{\alpha \beta \gamma},C_{\alpha \beta \gamma},D_{\alpha \beta \gamma},E_{\alpha \beta \gamma},F_{\alpha \beta \gamma},K_{\alpha \beta \gamma},L_{\alpha \beta \gamma},M_{\alpha \beta \gamma} \in \End(V)$ for all $\alpha,\beta,\gamma \in \Omega$.
Then $T_{\alpha \beta \gamma}$ is called a \emph{linear parametric tetrahedron map} 
if it satisfies the parametric tetrahedron equation~\eqref{Par-Tetrahedron-eq}.
\begin{remark}
\lb{dpnl}
Note that usually $\Omega$ is a subset of another vector space, 
and the dependence of~$T_{\alpha \beta \gamma}$ on the 
parameters $\alpha$, $\beta$, $\gamma$ is nonlinear.
Examples of such maps are presented in Section~\ref{sYBmvb}.
\end{remark}
\begin{proposition}\label{Lin-Par-Tetrahedron-Eqs-proposition}
A parametric map $T_{\alpha \beta \gamma}$ given by~\er{tabg}
satisfies the parametric tetrahedron equation~\eqref{Par-Tetrahedron-eq} 
if and only if it obeys the following list of equations 
for all values of the parameters $\alpha,\beta,\gamma,\delta,\epsilon,\zeta\in\Omega$
\begin{subequations}\label{parametric-linear-tetrahedron-eqs}
\begin{align}
 & A_{\alpha \beta \gamma}A_{\alpha \delta \epsilon}= 
 A_{\alpha \delta \epsilon}A_{\alpha \beta \gamma}, \quad
E_{\alpha \delta \epsilon}E_{\beta \delta \zeta}= 
 E_{\beta \delta \zeta}E_{\alpha \delta \epsilon}, \quad
M_{\beta \delta \zeta}M_{\gamma \epsilon \zeta}= 
 M_{\gamma \epsilon \zeta}M_{\beta \delta \zeta},
 \\ &
D_{\alpha \beta \gamma}A_{\alpha \delta \epsilon}= 
 A_{\beta \delta \zeta}D_{\alpha \beta \gamma}+ 
  B_{\beta \delta \zeta}D_{\alpha \delta \epsilon}A_{\alpha \beta \gamma},\quad 
 A_{\alpha \delta \epsilon}B_{\alpha \beta \gamma}= B_{\alpha \beta \gamma}A_{\beta \delta \zeta}+ 
  A_{\alpha \beta \gamma}B_{\alpha \delta \epsilon}D_{\beta \delta \zeta},\label{Lin-Par-eq-da-ad}
  \\ &
E_{\alpha \delta \epsilon}D_{\beta \delta \zeta}= D_{\beta \delta \zeta}E_{\alpha \beta \gamma}+ 
  E_{\beta \delta \zeta}D_{\alpha \delta \epsilon}B_{\alpha \beta \gamma},\quad
B_{\beta \delta \zeta}E_{\alpha \delta \epsilon}= 
 E_{\alpha \beta \gamma}B_{\beta \delta \zeta}+ 
  D_{\alpha \beta \gamma}B_{\alpha \delta \epsilon}E_{\beta \delta \zeta}, \label{Lin-Par-eq-be-eb}
   \\ &
 L_{\alpha \delta \epsilon}E_{\beta \delta \zeta}= E_{\gamma \epsilon \zeta}L_{\alpha \delta \epsilon}+ 
  F_{\gamma \epsilon \zeta}L_{\beta \delta \zeta}E_{\alpha \delta \epsilon},\quad
 E_{\beta \delta \zeta}F_{\alpha \delta \epsilon}=F_{\alpha \delta \epsilon}E_{\gamma \epsilon \zeta}+ 
  E_{\alpha \delta \epsilon}F_{\beta \delta \zeta}L_{\gamma \epsilon \zeta},
 \\ &
M_{\beta \delta \zeta}L_{\gamma \epsilon \zeta}= 
 M_{\gamma \epsilon \zeta}L_{\beta \delta \zeta}F_{\alpha \delta \epsilon}+ 
  L_{\gamma \epsilon \zeta}M_{\alpha \delta \epsilon}, \quad
F_{\gamma \epsilon \zeta}M_{\beta \delta \zeta}= 
 M_{\alpha \delta \epsilon}F_{\gamma \epsilon \zeta}+ 
  L_{\alpha \delta \epsilon}F_{\beta \delta \zeta}M_{\gamma \epsilon \zeta},
 \\&
D_{\alpha \delta \epsilon}= D_{\beta \delta \zeta}D_{\alpha \beta \gamma}+ 
  E_{\beta \delta \zeta}D_{\alpha \delta \epsilon}A_{\alpha \beta \gamma}, \quad B_{\alpha \delta \epsilon} =
  B_{\alpha \beta \gamma}B_{\beta \delta \zeta}+
 A_{\alpha \beta \gamma}B_{\alpha \delta \epsilon}E_{\beta \delta \zeta},\label{Lin-Par-eq-d-dd}
\\ &
L_{\beta \delta \zeta}= 
 L_{\gamma \epsilon \zeta}L_{\alpha \delta \epsilon}+ 
  M_{\gamma \epsilon \zeta}L_{\beta \delta \zeta}E_{\alpha \delta \epsilon}, \quad
F_{\beta \delta \zeta}= 
 F_{\alpha \delta \epsilon}F_{\gamma \epsilon \zeta}+ 
  E_{\alpha \delta \epsilon}F_{\beta \delta \zeta}M_{\gamma \epsilon \zeta},\label{Par-Lin-eq-l-ll}
 \\ &
K_{\alpha \delta \epsilon}= 
 D_{\gamma \epsilon \zeta}K_{\alpha \beta \gamma}+ 
  E_{\gamma \epsilon \zeta}K_{\alpha \delta \epsilon}A_{\alpha \beta \gamma}
  +F_{\gamma \epsilon \zeta}K_{\beta \delta \zeta}D_{\alpha \beta \gamma}+ 
  F_{\gamma \epsilon \zeta}L_{\beta \delta \zeta}D_{\alpha \delta \epsilon}A_{\alpha \beta \gamma},
   \\ &
 C_{\alpha \delta \epsilon}=C_{\alpha \beta \gamma}B_{\gamma \epsilon \zeta}+ 
  A_{\alpha \beta \gamma}C_{\alpha \delta \epsilon}E_{\gamma \epsilon \zeta}+ 
  B_{\alpha \beta \gamma}C_{\beta \delta \zeta}L_{\gamma \epsilon \zeta}+
  A_{\alpha \beta \gamma}B_{\alpha \delta \epsilon}F_{\beta \delta \zeta}L_{\gamma \epsilon \zeta},
\\ &
K_{\alpha \beta \gamma}A_{\alpha \delta \epsilon}= 
 A_{\gamma \epsilon \zeta}K_{\alpha \beta \gamma}+ 
  B_{\gamma \epsilon \zeta}K_{\alpha \delta \epsilon}A_{\alpha \beta \gamma}
  +C_{\gamma \epsilon \zeta}K_{\beta \delta \zeta}D_{\alpha \beta \gamma}+ 
  C_{\gamma \epsilon \zeta}L_{\beta \delta \zeta}D_{\alpha \delta \epsilon}A_{\alpha \beta \gamma},
  \\ &
A_{\alpha \delta \epsilon}C_{\alpha \beta \gamma}= 
  C_{\alpha \beta \gamma}A_{\gamma \epsilon \zeta}+ 
  A_{\alpha \beta \gamma}C_{\alpha \delta \epsilon}D_{\gamma \epsilon \zeta}+ 
  B_{\alpha \beta \gamma}C_{\beta \delta \zeta}K_{\gamma \epsilon \zeta}+ 
  A_{\alpha \beta \gamma}B_{\alpha \delta \epsilon}F_{\beta \delta \zeta}K_{\gamma \epsilon \zeta},
  \\ &
K_{\gamma \epsilon \zeta}K_{\alpha \beta \gamma}+ 
  L_{\gamma \epsilon \zeta}K_{\alpha \delta \epsilon}A_{\alpha \beta \gamma}+ 
  M_{\gamma \epsilon \zeta}K_{\beta \delta \zeta}D_{\alpha \beta \gamma}
  +M_{\gamma \epsilon \zeta}L_{\beta \delta \zeta}D_{\alpha \delta \epsilon}A_{\alpha \beta \gamma}= 0,
  \\ &
C_{\alpha \beta \gamma}C_{\gamma \epsilon \zeta}+ 
  A_{\alpha \beta \gamma}C_{\alpha \delta \epsilon}F_{\gamma \epsilon \zeta}+ 
  B_{\alpha \beta \gamma}C_{\beta \delta \zeta}M_{\gamma \epsilon \zeta}+
  A_{\alpha \beta \gamma}B_{\alpha \delta \epsilon}F_{\beta \delta \zeta}M_{\gamma \epsilon \zeta}= 0,
\\ &
E_{\alpha \beta \gamma}A_{\beta \delta \zeta}+ 
  D_{\alpha \beta \gamma}B_{\alpha \delta \epsilon}D_{\beta \delta \zeta}= 
 A_{\beta \delta \zeta}E_{\alpha \beta \gamma}+ 
  B_{\beta \delta \zeta}D_{\alpha \delta \epsilon}B_{\alpha \beta \gamma},
  \\ &
M_{\alpha \delta \epsilon}E_{\gamma \epsilon \zeta}+ 
  L_{\alpha \delta \epsilon}F_{\beta \delta \zeta}L_{\gamma \epsilon \zeta}= 
 E_{\gamma \epsilon \zeta}M_{\alpha \delta \epsilon}+ 
  F_{\gamma \epsilon \zeta}L_{\beta \delta \zeta}F_{\alpha \delta \epsilon},
 \\ &
K_{\beta \delta \zeta}= 
 K_{\gamma \epsilon \zeta}L_{\alpha \beta \gamma}+ 
  L_{\gamma \epsilon \zeta}K_{\alpha \delta \epsilon}B_{\alpha \beta \gamma}
 + M_{\gamma \epsilon \zeta}K_{\beta \delta \zeta}E_{\alpha \beta \gamma}+ 
  M_{\gamma \epsilon \zeta}L_{\beta \delta \zeta}D_{\alpha \delta \epsilon}B_{\alpha \beta \gamma},
  \\ &
C_{\beta \delta \zeta}= 
 F_{\alpha \beta \gamma}C_{\gamma \epsilon \zeta}+ 
  D_{\alpha \beta \gamma}C_{\alpha \delta \epsilon}F_{\gamma \epsilon \zeta}+ 
  E_{\alpha \beta \gamma}C_{\beta \delta \zeta}M_{\gamma \epsilon \zeta}+
  D_{\alpha \beta \gamma}B_{\alpha \delta \epsilon}F_{\beta \delta \zeta}M_{\gamma \epsilon \zeta},
 \\ &
M_{\beta \delta \zeta}K_{\gamma \epsilon \zeta}=K_{\gamma \epsilon \zeta}M_{\alpha \beta \gamma}+ 
  L_{\gamma \epsilon \zeta}K_{\alpha \delta \epsilon}C_{\alpha \beta \gamma}+
  M_{\gamma \epsilon \zeta}K_{\beta \delta \zeta}F_{\alpha \beta \gamma}+ 
  M_{\gamma \epsilon \zeta}L_{\beta \delta \zeta}D_{\alpha \delta \epsilon}C_{\alpha \beta \gamma},
 \\ &
C_{\gamma \epsilon \zeta}M_{\beta \delta \zeta}= 
 M_{\alpha \beta \gamma}C_{\gamma \epsilon \zeta}+ 
  K_{\alpha \beta \gamma}C_{\alpha \delta \epsilon}F_{\gamma \epsilon \zeta}+ 
  L_{\alpha \beta \gamma}C_{\beta \delta \zeta}M_{\gamma \epsilon \zeta}+
  K_{\alpha \beta \gamma}B_{\alpha \delta \epsilon}F_{\beta \delta \zeta}M_{\gamma \epsilon \zeta},
 \\ &
F_{\alpha \delta \epsilon}D_{\gamma \epsilon \zeta}+ 
  E_{\alpha \delta \epsilon}F_{\beta \delta \zeta}K_{\gamma \epsilon \zeta}= 
 D_{\beta \delta \zeta}F_{\alpha \beta \gamma}+ 
  E_{\beta \delta \zeta}D_{\alpha \delta \epsilon}C_{\alpha \beta \gamma},
 \\ &
B_{\gamma \epsilon \zeta}L_{\alpha \delta \epsilon}+ 
  C_{\gamma \epsilon \zeta}L_{\beta \delta \zeta}E_{\alpha \delta \epsilon}= 
 L_{\alpha \beta \gamma}B_{\beta \delta \zeta}+ 
  K_{\alpha \beta \gamma}B_{\alpha \delta \epsilon}E_{\beta \delta \zeta},
  \\ &
L_{\alpha \delta \epsilon}D_{\beta \delta \zeta}=
 D_{\gamma \epsilon \zeta}L_{\alpha \beta \gamma}+ 
  E_{\gamma \epsilon \zeta}K_{\alpha \delta \epsilon}B_{\alpha \beta \gamma}
  +F_{\gamma \epsilon \zeta}K_{\beta \delta \zeta}E_{\alpha \beta \gamma}+ 
  F_{\gamma \epsilon \zeta}L_{\beta \delta \zeta}D_{\alpha \delta \epsilon}B_{\alpha \beta \gamma},
  \\ &
 B_{\beta \delta \zeta}F_{\alpha \delta \epsilon}=F_{\alpha \beta \gamma}B_{\gamma \epsilon \zeta}+ 
  D_{\alpha \beta \gamma}C_{\alpha \delta \epsilon}E_{\gamma \epsilon \zeta}+ 
  E_{\alpha \beta \gamma}C_{\beta \delta \zeta}L_{\gamma \epsilon \zeta}+ 
  D_{\alpha \beta \gamma}B_{\alpha \delta \epsilon}F_{\beta \delta \zeta}L_{\gamma \epsilon \zeta},
  \\ & 
 L_{\alpha  \beta  \gamma}A_{\beta  \delta  \zeta}+ 
  K_{\alpha  \beta  \gamma}B_{\alpha  \delta  \epsilon}D_{\beta  \delta  \zeta}= 
 A_{\gamma  \epsilon  \zeta}L_{\alpha  \beta  \gamma}+ 
  B_{\gamma  \epsilon  \zeta}K_{\alpha  \delta  \epsilon}B_{\alpha  \beta  \gamma}
  +C_{\gamma  \epsilon  \zeta}K_{\beta  \delta  \zeta}E_{\alpha  \beta  \gamma}+ 
  C_{\gamma  \epsilon  \zeta}L_{\beta  \delta  \zeta}D_{\alpha  \delta  \epsilon}B_{\alpha  \beta  \gamma},
 \\ &
 A_{\beta \delta \zeta}F_{\alpha \beta \gamma}+ 
  B_{\beta \delta \zeta}D_{\alpha \delta \epsilon}C_{\alpha \beta \gamma}=F_{\alpha \beta \gamma}A_{\gamma \epsilon \zeta}+ 
  D_{\alpha \beta \gamma}C_{\alpha \delta \epsilon}D_{\gamma \epsilon \zeta}+ 
  E_{\alpha \beta \gamma}C_{\beta \delta \zeta}K_{\gamma \epsilon \zeta}+
  D_{\alpha \beta \gamma}B_{\alpha \delta \epsilon}F_{\beta \delta \zeta}K_{\gamma \epsilon \zeta}
  \\ &
M_{\alpha \delta \epsilon}D_{\gamma \epsilon \zeta} +
   L_{\alpha \delta \epsilon}F_{\beta \delta \zeta}K_{\gamma \epsilon \zeta} = 
 D_{\gamma \epsilon \zeta}M_{\alpha \beta \gamma}+ 
  E_{\gamma \epsilon \zeta}K_{\alpha \delta \epsilon}C_{\alpha \beta \gamma}+
  F_{\gamma \epsilon \zeta}K_{\beta \delta \zeta}F_{\alpha \beta \gamma}+ 
  F_{\gamma \epsilon \zeta}L_{\beta \delta \zeta}D_{\alpha \delta \epsilon}C_{\alpha \beta \gamma},
 \\ &
 B_{\gamma \epsilon \zeta}M_{\alpha \delta \epsilon}+ 
  C_{\gamma \epsilon \zeta}L_{\beta \delta \zeta}F_{\alpha \delta \epsilon}=M_{\alpha \beta \gamma}B_{\gamma \epsilon \zeta}+ 
  K_{\alpha \beta \gamma}C_{\alpha \delta \epsilon}E_{\gamma \epsilon \zeta}+ 
  L_{\alpha \beta \gamma}C_{\beta \delta \zeta}L_{\gamma \epsilon \zeta}+ 
  K_{\alpha \beta \gamma}B_{\alpha \delta \epsilon}F_{\beta \delta \zeta}L_{\gamma \epsilon \zeta},
 \\ &
  M_{\alpha \beta \gamma}A_{\gamma \epsilon \zeta}-
  A_{\gamma \epsilon \zeta}M_{\alpha \beta \gamma}+ 
  K_{\alpha \beta \gamma}C_{\alpha \delta \epsilon}D_{\gamma \epsilon \zeta}+ 
  L_{\alpha \beta \gamma}C_{\beta \delta \zeta}K_{\gamma \epsilon \zeta}+
  K_{\alpha \beta \gamma}B_{\alpha \delta \epsilon}F_{\beta \delta \zeta}K_{\gamma \epsilon \zeta}= \nonumber \\
 &=B_{\gamma \epsilon \zeta}K_{\alpha \delta \epsilon}C_{\alpha \beta \gamma}+ 
  C_{\gamma \epsilon \zeta}K_{\beta \delta \zeta}F_{\alpha \beta \gamma}+
  C_{\gamma \epsilon \zeta}L_{\beta \delta \zeta}D_{\alpha \delta \epsilon}C_{\alpha \beta \gamma}.
\end{align} 
\end{subequations}
\end{proposition}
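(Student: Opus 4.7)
The plan is to follow exactly the strategy of the proof of Proposition~\ref{Lin-Tetrahedron-Eqs-proposition}, carrying the parameter triples along throughout the computation. I would fix arbitrary parameters $\alpha,\beta,\gamma,\delta,\epsilon,\zeta\in\Omega$ and arbitrary vectors $x,y,z,r,s,t\in V$ and then expand both sides of~\eqref{Par-Tetrahedron-eq} evaluated on $(x,y,z,r,s,t)$ as elements of $V^6$ whose components are linear combinations of $x,y,z,r,s,t$ with coefficients built from the block matrices $A_*,B_*,\ldots,M_*$. The slot-to-parameter rule dictated by~\eqref{Par-Tetrahedron-eq} is transparent: each position $i\in\{1,\dots,6\}$ carries a fixed parameter ($1\!\to\!\alpha$, $2\!\to\!\beta$, $3\!\to\!\gamma$, $4\!\to\!\delta$, $5\!\to\!\epsilon$, $6\!\to\!\zeta$), so every occurrence of $T^{ijk}_{*}$ contributes entries $A_{*},B_{*},\ldots,M_{*}$ whose parameter triple is unambiguously determined by the index set $(i,j,k)$.

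Since the maps are linear and $x,y,z,r,s,t$ are arbitrary, equation~\eqref{Par-Tetrahedron-eq} holds if and only if, for each of the six output slots and each of the six input variables, the two resulting endomorphisms of $V$ agree. Each such coefficient comparison produces one equation of~\eqref{parametric-linear-tetrahedron-eqs}. For instance, comparing the coefficient of $y$ in the first output slot yields the parametric lift of the relation $ABD+BA=AB$ from~\eqref{LTE1}, namely $A_{\alpha\beta\gamma}B_{\alpha\delta\epsilon}D_{\beta\delta\zeta}+B_{\alpha\beta\gamma}A_{\beta\delta\zeta}=A_{\alpha\delta\epsilon}B_{\alpha\beta\gamma}$, which is the second equation in~\eqref{Lin-Par-eq-da-ad}. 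Collecting the equalities from all thirty-six coefficient comparisons reproduces the list~\eqref{parametric-linear-tetrahedron-eqs} exactly.

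The main obstacle is combinatorial bookkeeping rather than any conceptual difficulty. Each equation in the non-parametric list~\eqref{Linear-Tetrahedron-eqs} lifts to a parametric equation in~\eqref{parametric-linear-tetrahedron-eqs} by attaching the appropriate parameter triple to each block letter according to which of the four factors $T^{123}_{\alpha\beta\gamma}, T^{145}_{\alpha\delta\epsilon}, T^{246}_{\beta\delta\zeta}, T^{356}_{\gamma\epsilon\zeta}$ contributed it, and conversely specialising $\alpha=\beta=\gamma=\delta=\epsilon=\zeta$ recovers the non-parametric system. This symmetry with Proposition~\ref{Lin-Tetrahedron-Eqs-proposition} organises the computation and provides a systematic cross-check for every one of the equations in~\eqref{parametric-linear-tetrahedron-eqs}, so that the proof reduces to a lengthy but mechanical expansion with the parameter labels carried along.
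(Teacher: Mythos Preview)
Your proposal is correct and follows exactly the approach indicated in the paper, which simply states that the proof is similar to that of Proposition~\ref{Lin-Tetrahedron-Eqs-proposition}. Your explicit description of the slot-to-parameter rule and the sample coefficient comparison makes the argument even more transparent than the paper's one-line reference.
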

\begin{proof}
The proof is similar to the proof of Proposition \ref{Lin-Tetrahedron-Eqs-proposition}.
\end{proof}

\begin{remark}
\lb{rperm}
In what follows we deduce some consequences from~\er{parametric-linear-tetrahedron-eqs}.
Note that, since equations~\er{parametric-linear-tetrahedron-eqs} must hold 
for all values of the parameters $\alpha,\beta,\gamma,\delta,\epsilon,\zeta\in\Omega$, 
we are allowed to make any permutation of the parameters in these equations.
\end{remark}

\begin{proposition} 
System  \eqref{parametric-linear-tetrahedron-eqs} implies the following matrix equations
\begin{subequations}
\lb{pmteq}
\begin{eqnarray}
    &\begin{pmatrix}
        D_{\alpha \beta \gamma} & E_{\alpha \beta \gamma}  \\
        A_{\alpha \beta \gamma} & B_{\alpha \beta \gamma}
    \end{pmatrix}
    \begin{pmatrix}
        D_{\zeta \alpha \epsilon} & B_{\alpha \delta \epsilon}E_{\beta \delta \zeta}  \\
        D_{\zeta \beta \delta}A_{\zeta \alpha \epsilon} & B_{\beta \delta \zeta}
    \end{pmatrix}=
    \begin{pmatrix}
        D_{\zeta \beta \delta} & B_{\beta \delta \zeta}E_{\alpha \delta \epsilon}  \\
        D_{\zeta \alpha \epsilon}A_{\zeta \beta \delta} & B_{\alpha \delta \epsilon}
    \end{pmatrix}, \label{Lin-Par-eq-matrix-1}
\\
    &\begin{pmatrix}
        A_{\zeta \alpha \epsilon}B_{\zeta \beta \delta} & B_{\zeta \alpha \epsilon}  \\
        D_{\beta \delta \zeta} & E_{\beta \delta \zeta}D_{\alpha \delta \epsilon}
    \end{pmatrix}
    \begin{pmatrix}
        D_{\alpha \beta \gamma} & E_{\alpha \beta \gamma}  \\
        A_{\alpha \beta \gamma} & B_{\alpha \beta \gamma}
    \end{pmatrix}=
    \begin{pmatrix}
        A_{\zeta \beta \delta}B_{\zeta \alpha \epsilon} & B_{\zeta \beta \delta}  \\
        D_{\alpha \delta \epsilon} & E_{\alpha \delta \epsilon}D_{\beta \delta \zeta}
    \end{pmatrix},  \label{Lin-Par-eq-matrix-2}
\\
    &\begin{pmatrix}
        L_{\alpha \delta \epsilon} & M_{\alpha \delta \epsilon}  \\
        E_{\alpha \delta \epsilon} & F_{\alpha \delta \epsilon}
    \end{pmatrix}
    \begin{pmatrix}
        L_{\zeta \beta \delta} & F_{\beta \delta \zeta}M_{\gamma \epsilon \zeta}  \\
        L_{\gamma \beta \epsilon}E_{\zeta \beta \delta} & F_{\gamma \epsilon \zeta}
    \end{pmatrix}=
    \begin{pmatrix}
        L_{\gamma \beta \epsilon} & F_{\gamma \epsilon \zeta}M_{\beta \delta \zeta}  \\
        L_{\zeta \beta \delta}E_{\gamma \beta \epsilon} & F_{\beta \delta \zeta}
    \end{pmatrix},  \label{Lin-Par-eq-matrix-3}
\\
    &\begin{pmatrix}
        E_{\zeta \beta \delta}F_{\gamma \beta \epsilon} & F_{\zeta \beta \delta}  \\
        L_{\gamma \epsilon \zeta} & M_{\gamma \epsilon \zeta}L_{\beta \delta \zeta}
    \end{pmatrix}
    \begin{pmatrix}
        L_{\alpha \delta \epsilon} & M_{\alpha \delta \epsilon}  \\
        E_{\alpha \delta \epsilon} & F_{\alpha \delta \epsilon}
    \end{pmatrix}=
    \begin{pmatrix}
        E_{\gamma \beta \epsilon}F_{\zeta \beta \delta} & F_{\gamma \beta \epsilon} \\
        L_{\beta \delta \zeta} & M_{\beta \delta \zeta}L_{\gamma \epsilon \zeta}
    \end{pmatrix}.  \label{Lin-Par-eq-matrix-4}
\end{eqnarray}
\end{subequations}
\end{proposition}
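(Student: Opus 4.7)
The plan is to expand each of the four $2\times 2$ block matrix equations in \eqref{pmteq} entrywise, which produces four operator identities in $\End(V)$ per matrix equation (sixteen in total), and to identify each such identity with an equation from the list \eqref{parametric-linear-tetrahedron-eqs}, after a suitable relabelling of the six parameters $\alpha,\beta,\gamma,\delta,\epsilon,\zeta$. By Remark \ref{rperm}, relabellings are permissible because the equations in \eqref{parametric-linear-tetrahedron-eqs} hold for arbitrary values of the parameters in $\Omega$.

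I would first work out \eqref{Lin-Par-eq-matrix-1} as a model. Multiplying the $2\times 2$ blocks yields, for the $(1,1)$, $(1,2)$, $(2,1)$, $(2,2)$ entries respectively, the four operator identities
\begin{gather*}
D_{\alpha\beta\gamma} D_{\zeta\alpha\epsilon} + E_{\alpha\beta\gamma} D_{\zeta\beta\delta} A_{\zeta\alpha\epsilon} = D_{\zeta\beta\delta},\\
D_{\alpha\beta\gamma} B_{\alpha\delta\epsilon} E_{\beta\delta\zeta} + E_{\alpha\beta\gamma} B_{\beta\delta\zeta} = B_{\beta\delta\zeta} E_{\alpha\delta\epsilon},\\
A_{\alpha\beta\gamma} D_{\zeta\alpha\epsilon} + B_{\alpha\beta\gamma} D_{\zeta\beta\delta} A_{\zeta\alpha\epsilon} = D_{\zeta\alpha\epsilon} A_{\zeta\beta\delta},\\
A_{\alpha\beta\gamma} B_{\alpha\delta\epsilon} E_{\beta\delta\zeta} + B_{\alpha\beta\gamma} B_{\beta\delta\zeta} = B_{\alpha\delta\epsilon}.
\end{gather*}
The second and fourth entries are literally the second equation of \eqref{Lin-Par-eq-be-eb} and the second equation of \eqref{Lin-Par-eq-d-dd}, respectively. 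The first and third entries coincide with the first equation of \eqref{Lin-Par-eq-d-dd} and the first equation of \eqref{Lin-Par-eq-da-ad} after applying the substitution $(\alpha,\beta,\gamma,\delta,\epsilon,\zeta)\mapsto(\zeta,\alpha,\epsilon,\beta,\delta,\gamma)$ in the names of the parameters.

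The remaining three matrix equations \eqref{Lin-Par-eq-matrix-2}--\eqref{Lin-Par-eq-matrix-4} are treated in the same fashion. Equation \eqref{Lin-Par-eq-matrix-2} is a rearrangement of essentially the same four operator identities used in \eqref{Lin-Par-eq-matrix-1}, with the roles of the left and right factors exchanged; the needed identities again appear in \eqref{Lin-Par-eq-da-ad}--\eqref{Lin-Par-eq-d-dd}, up to permutations of parameters. Equations \eqref{Lin-Par-eq-matrix-3} and \eqref{Lin-Par-eq-matrix-4} are the images of \eqref{Lin-Par-eq-matrix-1} and \eqref{Lin-Par-eq-matrix-2} under the involution swapping the blocks $\{A,B,D,E\}\leftrightarrow\{M,L,F,E\}$ which corresponds to the symmetry \eqref{tcsym} of Proposition \ref{pinvt}; their entries are checked against the analogous equations involving $L,M,F,E$, namely the last halves of \eqref{Lin-Par-eq-be-eb}--\eqref{Par-Lin-eq-l-ll}.

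The only real obstacle is bookkeeping: the sixteen scalar identities produced by expansion all have the same polynomial shape as equations in the list, but the six parameter slots are distributed differently, so each one must be matched individually and the corresponding permutation recorded. No conceptual step is required beyond direct block-matrix multiplication and the freedom granted by Remark \ref{rperm}.
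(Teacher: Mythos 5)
Your proposal is correct and follows essentially the same route as the paper: expanding \eqref{Lin-Par-eq-matrix-1} entrywise and matching the four resulting identities with the second of \eqref{Lin-Par-eq-be-eb}, the second of \eqref{Lin-Par-eq-d-dd}, and the first equations of \eqref{Lin-Par-eq-da-ad} and \eqref{Lin-Par-eq-d-dd} under the permutation $(\alpha,\beta,\gamma,\delta,\epsilon,\zeta)\mapsto(\zeta,\alpha,\epsilon,\beta,\delta,\gamma)$, which is exactly the paper's argument (it likewise treats \eqref{Lin-Par-eq-matrix-2}--\eqref{Lin-Par-eq-matrix-4} ``similarly''). Your additional remark relating \eqref{Lin-Par-eq-matrix-3}--\eqref{Lin-Par-eq-matrix-4} to the symmetry \eqref{pcsym} is a harmless extra observation, not a deviation in method.
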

\begin{proof}
As explained in Remark~\ref{rperm}, 
we are allowed to make any permutation of the parameters.
Making the permutation 
$(\alpha,\beta,\gamma ,\delta, \epsilon, \zeta) \rightarrow (\zeta,\alpha,\epsilon, \beta, \delta, \gamma)$
in the first of~\eqref{Lin-Par-eq-da-ad} and the first of~\eqref{Lin-Par-eq-d-dd} and 
taking the second from~\eqref{Lin-Par-eq-be-eb} and the second from~\eqref{Lin-Par-eq-d-dd},
we obtain~\eqref{Lin-Par-eq-matrix-1}. 
Equations~\er{Lin-Par-eq-matrix-2}--\er{Lin-Par-eq-matrix-4} can be deduced 
from~\er{parametric-linear-tetrahedron-eqs} similarly.
\end{proof}
\begin{remark}
\label{rpequiv}
One can check that equations~\eqref{pmteq} are equivalent 
to~\eqref{Lin-Par-eq-da-ad}--\eqref{Par-Lin-eq-l-ll},
up to permutations of the parameters.
Thus, equations~\eqref{Lin-Par-eq-da-ad}--\eqref{Par-Lin-eq-l-ll}
can be replaced by equations~\eqref{pmteq}, 
which have more clear structure.
\end{remark}

\begin{proposition}
\lb{pipm}
For any vector space $V$,
the set of linear parametric tetrahedron maps~\er{tabg} 
is invariant under the following transformations
\begin{gather}
\lb{pcsym}
\begin{pmatrix}
         A_{\alpha \beta \gamma}&B_{\alpha \beta \gamma}&C_{\alpha \beta \gamma}  \\
         D_{\alpha \beta \gamma}&E_{\alpha \beta \gamma}&F_{\alpha \beta \gamma}  \\
         K_{\alpha \beta \gamma}&L_{\alpha \beta \gamma}&M_{\alpha \beta \gamma}
\end{pmatrix}\mapsto
\begin{pmatrix}
M_{\gamma \beta \alpha}&L_{\gamma \beta \alpha}&K_{\gamma \beta \alpha}  \\
         F_{\gamma \beta \alpha}&E_{\gamma \beta \alpha}&D_{\gamma \beta \alpha}  \\
         C_{\gamma \beta \alpha}&B_{\gamma \beta \alpha}&A_{\gamma \beta \alpha}
\end{pmatrix},\\
\lb{pchsn}
\begin{pmatrix}
         A_{\alpha \beta \gamma}&B_{\alpha \beta \gamma}&C_{\alpha \beta \gamma}  \\
         D_{\alpha \beta \gamma}&E_{\alpha \beta \gamma}&F_{\alpha \beta \gamma}  \\
         K_{\alpha \beta \gamma}&L_{\alpha \beta \gamma}&M_{\alpha \beta \gamma}
\end{pmatrix}\mapsto
\begin{pmatrix}
         -A_{\alpha \beta \gamma}&B_{\alpha \beta \gamma}&-C_{\alpha \beta \gamma}  \\
         D_{\alpha \beta \gamma}&-E_{\alpha \beta \gamma}&F_{\alpha \beta \gamma}  \\
         -K_{\alpha \beta \gamma}&L_{\alpha \beta \gamma}&-M_{\alpha \beta \gamma}
\end{pmatrix}.
\end{gather}

Let $V=\mathbb{K}^n$ for some $n\in\zsp$.
Then 
$A_{\alpha\beta\gamma}$, $B_{\alpha\beta\gamma}$, $C_{\alpha\beta\gamma}$,
$D_{\alpha\beta\gamma}$, $E_{\alpha\beta\gamma}$, $F_{\alpha\beta\gamma}$,
$K_{\alpha\beta\gamma}$, $L_{\alpha\beta\gamma}$, $M_{\alpha\beta\gamma}$ 
in~\eqref{tabg} are $n\times n$ matrices.
In this case, the set of linear parametric tetrahedron maps~\eqref{tabg} 
is invariant also under the transformation
\begin{gather}
\lb{ptrtr}
\begin{pmatrix}
         A_{\alpha \beta \gamma}&B_{\alpha \beta \gamma}&C_{\alpha \beta \gamma}  \\
         D_{\alpha \beta \gamma}&E_{\alpha \beta \gamma}&F_{\alpha \beta \gamma}  \\
         K_{\alpha \beta \gamma}&L_{\alpha \beta \gamma}&M_{\alpha \beta \gamma}
\end{pmatrix}\mapsto
\begin{pmatrix}
         A_{\alpha \beta \gamma}&B_{\alpha \beta \gamma}&C_{\alpha \beta \gamma}  \\
         D_{\alpha \beta \gamma}&E_{\alpha \beta \gamma}&F_{\alpha \beta \gamma}  \\
         K_{\alpha \beta \gamma}&L_{\alpha \beta \gamma}&M_{\alpha \beta \gamma}
 \end{pmatrix}^\ts =
\begin{pmatrix}
         A_{\alpha \beta \gamma}^\ts&D_{\alpha \beta \gamma}^\ts&K_{\alpha \beta \gamma}^\ts  \\
         B_{\alpha \beta \gamma}^\ts&E_{\alpha \beta \gamma}^\ts&L_{\alpha \beta \gamma}^\ts  \\
         C_{\alpha \beta \gamma}^\ts&F_{\alpha \beta \gamma}^\ts&M_{\alpha \beta \gamma}^\ts
\end{pmatrix}.
\end{gather}
\end{proposition}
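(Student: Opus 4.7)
The plan is to reduce each of the three invariance statements to the corresponding nonparametric statement of Proposition~\ref{pinvt} via the reformulation of a parametric tetrahedron map as the nonparametric map~\eqref{tomv} on $W=\Omega\times V$. According to Remark~\ref{rnonlpar}, $T_{\alpha\beta\gamma}$ is a parametric tetrahedron map if and only if the associated $T\cl W^3\to W^3$, defined by~\eqref{tomv},~\eqref{uvabg},~\eqref{wabg}, obeys~\eqref{Tetrahedron-eq}. Hence it suffices to trace through what the three nonparametric constructions of Proposition~\ref{pinvt} do to the block-matrix presentation when the underlying map has the parameter-preserving form~\eqref{tomv}.

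For the transformation~\eqref{pcsym}, I would apply Proposition~\ref{pp13} to $W=\Omega\times V$, obtaining that $\tilde T=P^{13}\circ T\circ P^{13}$ is a tetrahedron map. Since $P^{13}$ swaps the first and third factors of $W^3=(\Omega\times V)^3$, it also swaps $\alpha$ with $\gamma$ (and $x$ with $z$). Consequently, $\tilde T$ is again a map of the form~\eqref{tomv} with parameters appearing in the order $(\gamma,\beta,\alpha)$, and a direct computation of $P^{13}\circ T_{\gamma\beta\alpha}\circ P^{13}$ shows that its block matrix is exactly the one on the right-hand side of~\eqref{pcsym}. For~\eqref{pchsn}, I would apply Proposition~\ref{pist} with $\sigma\cl W\to W$ defined by $\sigma(\omega,v)=(\omega,-v)$; this satisfies $\sigma\circ\sigma=\id$, and the equality $(\sigma\times\sigma\times\sigma)\circ T\circ(\sigma\times\sigma\times\sigma)=T$ follows because each block map $T_{\alpha\beta\gamma}$ is linear in $(x,y,z)$, so simultaneous sign-flips commute through. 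A short calculation of $\tilde T=(\sigma\times\id\times\sigma)\circ T\circ(\id\times\sigma\times\id)$ on a general triple $(x,y,z)$ reproduces the block matrix on the right-hand side of~\eqref{pchsn}.

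For~\eqref{ptrtr}, assume $V=\mathbb{K}^n$ so that all the blocks are matrices. I would take the matrix transpose of both sides of the parametric tetrahedron equation~\eqref{Par-Tetrahedron-eq}. Since transposition is an antihomomorphism, $(AB)^\ts=B^\ts A^\ts$, the transpose reverses the order of composition on each side, which interchanges the two sides of~\eqref{Par-Tetrahedron-eq}. The key compatibility to check is that $(T_{\alpha\beta\gamma}^{ijk})^\ts=(T_{\alpha\beta\gamma}^\ts)^{ijk}$, i.e.\ that forming the $(i,j,k)$-extension commutes with transposition; this is immediate from the definition of $T^{ijk}$ as a block-diagonal-like insertion into $V^6$, together with the fact that the transpose of the block matrix in~\eqref{tabg} is precisely the matrix on the right-hand side of~\eqref{ptrtr}.

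The only mild obstacle is bookkeeping in the first step: one must verify that the permutation of parameters induced by $P^{13}$ interacts correctly with the parameter-preserving form~\eqref{tomv}, so that the associated map of the transformed block matrix in~\eqref{pcsym} really coincides with $P^{13}\circ T\circ P^{13}$. Once this identification is made, the three statements follow by routine substitution into Propositions~\ref{pp13},~\ref{pist} and by direct use of the antihomomorphism property of matrix transposition, respectively.
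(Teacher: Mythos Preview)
Your proposal is correct and follows essentially the same route as the paper: you invoke Proposition~\ref{pp13} on $W=\Omega\times V$ for~\eqref{pcsym}, Proposition~\ref{pist} with $\sigma(\omega,v)=(\omega,-v)$ for~\eqref{pchsn}, and transposition of both sides of~\eqref{Par-Tetrahedron-eq} for~\eqref{ptrtr}, exactly as the paper does. Your write-up is in fact more detailed than the paper's, which simply points to Propositions~\ref{pp13} and~\ref{pist} and to the transpose without spelling out the bookkeeping.
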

\begin{proof}
The statement about the transformation~\er{pcsym} 
follows from Proposition~\ref{pp13} with $W=\Omega\times V$.

The case of the transformation~\er{pchsn} follows from Proposition~\ref{pist},
if we take $W=\Omega\times V$ and consider the map 
$$
\sigma\cl \Omega\times V\to \Omega\times V,\qqquad 
\sigma(\xi,v)=(\xi,-v),\qquad\xi\in\Omega,\qquad v\in V.
$$

To prove the statement about the transformation~\er{ptrtr},
one can apply the transpose operation 
to both sides of the parametric tetrahedron equation~\er{Par-Tetrahedron-eq}
for $T_{\alpha \beta \gamma}$ given by~\er{tabg}.
\end{proof}

\begin{proposition}\label{lin-par-family-abc}
Let $T_{1,\alpha\beta\gamma}$, $T_{2,\alpha\beta\gamma}$ be linear parametric tetrahedron maps of the form
\begin{equation*}
    T_{1,\alpha\beta\gamma}=\begin{pmatrix}
     A_{\alpha\beta\gamma}&B_{\alpha \beta \gamma}&0  \\
     D_{\alpha \beta \gamma}&E_{\alpha \beta \gamma}&0  \\
     0&0&M_{\alpha \beta \gamma}
\end{pmatrix},\qqquad
    T_{2,\alpha\beta\gamma}=\begin{pmatrix}
     \tilde{A}_{\alpha\beta\gamma}&0&0  \\
     0&\tilde{E}_{\alpha\beta\gamma}&\tilde{F}_{\alpha\beta\gamma}\\
     0&\tilde{L}_{\alpha\beta\gamma}&\tilde{M}_{\alpha\beta\gamma}
\end{pmatrix}.
\end{equation*}
Let $l,m\in\mathbb{K}$, $l\neq 0$. Then 
\begin{equation*}
    T_{1,\alpha\beta\gamma}^{l,m}=\begin{pmatrix}
     lA_{\alpha\beta\gamma}&B_{\alpha\beta\gamma}&0  \\
     D_{\alpha\beta\gamma}&l^{-1}E_{\alpha\beta\gamma}&0  \\
     0&0&mM_{\alpha\beta\gamma}
\end{pmatrix},\qqquad
    T_{2,\alpha\beta\gamma}^{l,m}=\begin{pmatrix}
     m\tilde{A}_{\alpha\beta\gamma}&0&0  \\
     0&l\tilde{E}_{\alpha\beta\gamma}&\tilde{F}_{\alpha\beta\gamma}\\
     0&\tilde{L}_{\alpha\beta\gamma}&l^{-1}\tilde{M}_{\alpha\beta\gamma}
\end{pmatrix}
\end{equation*}
are linear parametric tetrahedron maps as well.
\end{proposition}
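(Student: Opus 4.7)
The plan is to substitute the specific block forms into the system~\eqref{parametric-linear-tetrahedron-eqs} and exploit homogeneity in the rescaling parameters $l,m$. By Proposition~\ref{Lin-Par-Tetrahedron-Eqs-proposition}, it suffices to show that if $T_{1,\alpha\beta\gamma}$ (resp.\ $T_{2,\alpha\beta\gamma}$) obeys~\eqref{parametric-linear-tetrahedron-eqs}, then so does the rescaled map $T_{1,\alpha\beta\gamma}^{l,m}$ (resp.\ $T_{2,\alpha\beta\gamma}^{l,m}$). Since the two cases are completely analogous, I would focus on $T_1$; the case of $T_2$ follows by the same bookkeeping, or alternatively by applying the transformation~\eqref{pcsym} from Proposition~\ref{pipm}, which swaps the roles of the upper-left and lower-right blocks.

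First I would note that for $T_{1,\alpha\beta\gamma}$ we have $C_{\alpha\beta\gamma}=F_{\alpha\beta\gamma}=K_{\alpha\beta\gamma}=L_{\alpha\beta\gamma}=0$. Plugging this into~\eqref{parametric-linear-tetrahedron-eqs}, a large number of equations collapse to the trivial identity $0=0$, since every term in them contains at least one of $C,F,K,L$ as a factor. The remaining nontrivial equations split into three decoupled groups: (i) equations involving only $A_{\alpha\beta\gamma},B_{\alpha\beta\gamma},D_{\alpha\beta\gamma},E_{\alpha\beta\gamma}$; (ii) equations involving only $M_{\alpha\beta\gamma}$; (iii) ``mixed'' equations that, after the vanishings, reduce to a commutation relation between $A$ and $M$ (coming from the last equation in~\eqref{parametric-linear-tetrahedron-eqs}).

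The next step is to verify that in each of these three groups every surviving equation is \emph{homogeneous} in the rescaling parameters. In group~(i) every term contains exactly one factor from $\{A_{\alpha\beta\gamma}\}$ and exactly one factor from $\{E_{\alpha\beta\gamma}\}$, or else exactly one factor of the same type on each side; the substitution $A\mapsto lA$, $E\mapsto l^{-1}E$ (with $B,D$ unchanged) therefore multiplies both sides of each equation by the same power of $l$, and the relation persists. For example, $D_{\alpha\beta\gamma}A_{\alpha\delta\epsilon}=A_{\beta\delta\zeta}D_{\alpha\beta\gamma}+B_{\beta\delta\zeta}D_{\alpha\delta\epsilon}A_{\alpha\beta\gamma}$ has degree $1$ in $A$ and degree $0$ in $E$ on each monomial, hence is preserved. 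In group~(ii) the only surviving relation is $M_{\beta\delta\zeta}M_{\gamma\epsilon\zeta}=M_{\gamma\epsilon\zeta}M_{\beta\delta\zeta}$, which is preserved by $M\mapsto mM$. In group~(iii) the relation reduces to $[M_{\alpha\beta\gamma},A_{\gamma\epsilon\zeta}]=0$, which is preserved by the combined rescaling $A\mapsto lA$, $M\mapsto mM$, since both terms pick up the same factor $lm$.

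The main thing to watch is the systematic case-check that no ``hidden'' surviving equation mixes the four rescalings in an inhomogeneous way; once each equation is confirmed to be monomial-homogeneous in the relevant scaling, the conclusion is immediate. For $T_{2,\alpha\beta\gamma}$ the dual situation holds: $B=C=D=K=0$ annihilates the analogous family of equations, the remaining ones split into a group in $\{E,F,L,M\}$ (preserved by $E\mapsto lE$, $M\mapsto l^{-1}M$ because in every surviving monomial the pair $(E,M)$ appears with matching powers), a group involving only $A$ (preserved by $A\mapsto mA$), and a commutation relation $[A,M]=0$ (preserved by the combined rescaling). This completes the proof sketch.
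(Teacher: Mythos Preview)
Your approach is the same as the paper's: substitute the special block form into~\eqref{parametric-linear-tetrahedron-eqs} and check that each surviving relation is homogeneous under the rescaling. The paper's proof is a single sentence to this effect, and your sketch fills in the bookkeeping correctly in spirit.

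One small inaccuracy: your group~(iii) is not just $[M_{\alpha\beta\gamma},A_{\gamma\epsilon\zeta}]=0$. After setting $C=F=K=L=0$ in~\eqref{parametric-linear-tetrahedron-eqs}, several other ``mixed'' relations survive, for instance $M_{\alpha\delta\epsilon}E_{\gamma\epsilon\zeta}=E_{\gamma\epsilon\zeta}M_{\alpha\delta\epsilon}$, $M_{\alpha\delta\epsilon}D_{\gamma\epsilon\zeta}=D_{\gamma\epsilon\zeta}M_{\alpha\beta\gamma}$, and $B_{\gamma\epsilon\zeta}M_{\alpha\delta\epsilon}=M_{\alpha\beta\gamma}B_{\gamma\epsilon\zeta}$. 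Each of these is still homogeneous under $A\mapsto lA$, $E\mapsto l^{-1}E$, $M\mapsto mM$ (the two sides pick up the same monomial in $l,m$), so your conclusion stands; but the case-check you flag as ``the main thing to watch'' does need to include them. Your reduction of the $T_2$ case to the $T_1$ case via the involution~\eqref{pcsym} is a nice shortcut (with $l'=l^{-1}$, $m'=m$) that the paper does not make explicit.
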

\begin{proof}
For each $i=1,2$, the fact that $T_{i,\alpha\beta\gamma}$ 
obeys equations~\er{parametric-linear-tetrahedron-eqs} 
implies that $T_{i,\alpha\beta\gamma}^{l,m}$ obeys these equations as well.
\end{proof}

\section{Differentials of Yang--Baxter and tetrahedron maps}
\label{sdybt}

In this section, when we consider maps of manifolds, 
we assume that they are either smooth, or complex-analytic, or rational,
so that the differential is defined for such a map.

Let $\md$ be a manifold. 
Consider the tangent bundle $\tau\cl T\md\to\md$. Then 
\begin{itemize}
	\item the bundle $\tau\times\tau\cl T\md\times T\md\to \md\times \md$ 
can be identified with the tangent bundle of the manifold $\md\times \md$,
\item the bundle $\tau\times\tau\times\tau\cl T\md\times T\md\times T\md\to \md\times \md\times \md$ 
can be identified with the tangent bundle of the manifold $\md\times \md\times \md$.
\end{itemize}
Using these identifications and the general procedure 
to define the differential of a map of manifolds, for any maps 
$$
Y\cl\md\times \md\to\md\times\md,\qqquad 
\mathbf{T}\cl\md\times \md\times\md\to\md\times\md\times\md
$$
we obtain the differentials
$$
\dft Y\cl T\md\times T\md\to T\md\times T\md,\qqquad
\dft\mathbf{T}\cl T\md\times T\md\times T\md\to T\md\times T\md\times T\md.
$$

\begin{remark}
\lb{rdl}
The maps $\dft Y$, $\dft\mathbf{T}$ are linear along the fibres 
of the bundle $T\md\to\md$ and, in general, 
are nonlinear with respect to (local) coordinates on the manifold~$\md$.
The explicit computation of $\dft Y$, $\dft\mathbf{T}$ 
in coordinates is presented below.
\end{remark}

We need the following well-known property of differentials.
\begin{lemma}
\lb{lcdiff}
Let $\md_1$, $\md_2$, $\md_3$ be manifolds.
Consider maps $f\cl\md_1\to\md_2$, $\,g\cl\md_2\to\md_3$ and their differentials 
$\dft f\cl T\md_1\to T\md_2$, $\,\dft g\cl T\md_2\to T\md_3$.

Then for the differential $\dft(g\circ f)\cl T\md_1\to T\md_3$ of the composition map 
$g\circ f\cl \md_1\to \md_3$ we have $\dft(g\circ f)=\dft g\circ\dft f$.
\end{lemma}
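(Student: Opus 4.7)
The plan is to verify the claim pointwise. Fix any point $p\in\md_1$ and set $q=f(p)\in\md_2$. Both $\dft(g\circ f)$ and $\dft g\circ\dft f$ restrict at $p$ to linear maps $T_p\md_1\to T_{g(q)}\md_3$, so it suffices to show that these two linear maps coincide on every tangent vector $v\in T_p\md_1$.

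To do this, I would invoke the standard curve representation of tangent vectors: in the smooth or complex-analytic setting, any $v\in T_p\md_1$ can be written as $\gamma'(0)$ for a suitable curve $\gamma\cl I\to\md_1$ with $\gamma(0)=p$, and the intrinsic definition of the differential of a map $\varphi$ of manifolds reads $(\dft\varphi)_p(\gamma'(0))=(\varphi\circ\gamma)'(0)$. Applying this once gives $(\dft f)_p(v)=(f\circ\gamma)'(0)\in T_q\md_2$, and applying it again at $q$ gives
\begin{equation*}
(\dft g)_q\bigl((\dft f)_p(v)\bigr)=\bigl(g\circ(f\circ\gamma)\bigr)'(0).
\end{equation*}
Directly applying the definition to the composition $g\circ f$ at $p$ yields
\begin{equation*}
\bigl(\dft(g\circ f)\bigr)_p(v)=\bigl((g\circ f)\circ\gamma\bigr)'(0).
\end{equation*}
Associativity of composition of maps then gives the equality of the two expressions.

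For the rational case, where tangent vectors are not literally velocities of curves over an arbitrary field, I would instead argue in local coordinates: choose coordinate charts around $p$, $f(p)$, and $g(f(p))$, represent $\dft f$, $\dft g$, $\dft(g\circ f)$ by their Jacobian matrices in these charts, and reduce the statement to the multivariable chain rule $J_{g\circ f}(p)=J_g(f(p))\cdot J_f(p)$. This latter identity is a purely algebraic manipulation of partial derivatives of rational functions and is valid on the open set where $f$ and $g\circ f$ are defined.

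There is no real obstacle here; the lemma is a restatement of the chain rule that the authors include only to cite it cleanly later, when proving that the differential of a Yang--Baxter map (respectively tetrahedron map) is again a Yang--Baxter map (respectively tetrahedron map). The only point requiring a small amount of care is to cover the smooth, complex-analytic, and rational categories uniformly; the two approaches outlined above together handle all three cases.
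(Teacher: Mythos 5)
Your argument is correct: the pointwise curve characterisation of tangent vectors gives $\bigl(\dft(g\circ f)\bigr)_p(\gamma'(0))=\bigl(g\circ f\circ\gamma\bigr)'(0)=(\dft g)_{f(p)}\bigl((\dft f)_p(\gamma'(0))\bigr)$ in the smooth and complex-analytic settings, and the Jacobian chain rule $J_{g\circ f}(p)=J_g(f(p))\,J_f(p)$ handles the rational case on the locus where the maps are defined. The paper itself offers no proof here — it states the lemma as a well-known property of differentials (it is exactly the chain rule) and only uses it later for Propositions on differentials of Yang--Baxter and tetrahedron maps — so your write-up simply supplies the standard argument that the authors chose to omit, and it does so adequately for all three categories of maps the paper considers.
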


\begin{proposition}
\label{thdyb}
Let $\md$ be a manifold.
For any Yang--Baxter map $Y\cl\md\times \md\to \md\times\md$,
the differential $\dft Y\cl T\md\times T\md\to T\md\times T\md$
is a Yang--Baxter map of the manifold $T\md\times T\md$.
\end{proposition}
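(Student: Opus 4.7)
The plan is to deduce the Yang--Baxter equation for $\dft Y$ by applying the differential functor to the Yang--Baxter equation for $Y$, exploiting the naturality of $\dft$ with respect to composition (Lemma~\ref{lcdiff}) and its compatibility with Cartesian products.

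First I would fix the canonical identification $T(\md^3)\cong T\md\times T\md\times T\md$, so that the differential of a map $\md^3\to\md^3$ can be viewed as a map $(T\md)^3\to(T\md)^3$. Recall that under this identification, for maps $f\cl\md\to\md$ and $g\cl\md\to\md$ one has $\dft(f\times g)=\dft f\times \dft g$, and the differential of $\id_\md$ is $\id_{T\md}$.

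The key step is the identity $\dft(Y^{ij})=(\dft Y)^{ij}$ for each $(i,j)\in\{(1,2),(1,3),(2,3)\}$, regarded as maps $(T\md)^3\to(T\md)^3$. For $(i,j)=(1,2)$ and $(2,3)$ this is immediate from $Y^{12}=Y\times\id_\md$ and $Y^{23}=\id_\md\times Y$ together with the product compatibility mentioned above. For $(i,j)=(1,3)$ one factors $Y^{13}=\sigma\circ(Y\times\id_\md)\circ\sigma$, where $\sigma\cl\md^3\to\md^3$ swaps the second and third factors; since $\dft\sigma$ swaps the corresponding factors of $(T\md)^3$, applying Lemma~\ref{lcdiff} and the product compatibility yields $\dft(Y^{13})=(\dft Y)^{13}$.

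With this identification in hand, I apply $\dft$ to both sides of the Yang--Baxter equation $Y^{12}\circ Y^{13}\circ Y^{23}=Y^{23}\circ Y^{13}\circ Y^{12}$ satisfied by $Y$. Iterating Lemma~\ref{lcdiff} gives
\begin{equation*}
\dft(Y^{12})\circ\dft(Y^{13})\circ\dft(Y^{23})=\dft(Y^{23})\circ\dft(Y^{13})\circ\dft(Y^{12}),
\end{equation*}
which, using the identities $\dft(Y^{ij})=(\dft Y)^{ij}$ established above, rewrites as
\begin{equation*}
(\dft Y)^{12}\circ(\dft Y)^{13}\circ(\dft Y)^{23}=(\dft Y)^{23}\circ(\dft Y)^{13}\circ(\dft Y)^{12}.
\end{equation*}
This is precisely the Yang--Baxter equation for $\dft Y\cl T\md\times T\md\to T\md\times T\md$.

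The main obstacle is the verification that $\dft(Y^{13})=(\dft Y)^{13}$; the other two cases are trivial product computations, but $Y^{13}$ does not split as a Cartesian product of a map on $\md$ with $\id_\md$, so the conjugation-by-a-permutation argument is the one genuinely non-formal step. Everything else reduces to functoriality of $\dft$.
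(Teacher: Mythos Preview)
Your proof is correct and follows essentially the same approach as the paper: apply the differential to the Yang--Baxter equation using Lemma~\ref{lcdiff}, then establish $\dft(Y^{ij})=(\dft Y)^{ij}$ using the product compatibility for $(1,2)$ and $(2,3)$ and a permutation-conjugation argument for $(1,3)$. The only cosmetic difference is that the paper decomposes $Y^{13}$ via the transposition of factors $1$ and $2$ composed with $\id_\md\times Y$, whereas you use the transposition of factors $2$ and $3$ composed with $Y\times\id_\md$; both are equally valid.
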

\begin{proof}
Consider the permutation maps
\begin{gather*}
P^{12}\cl \md\times\md\times\md\to\md\times\md\times\md,\qquad 
P^{12}(a_1,a_2,a_3)=(a_2,a_1,a_3),\qquad a_i\in\md,\\
\tilde{P}^{12}\cl T\md\times T\md\times T\md\to T\md\times T\md\times T\md,\qquad 
\tilde{P}^{12}(b_1,b_2,b_3)=(b_2,b_1,b_3),\qquad b_i\in T\md,
\end{gather*}
and the identity maps $\id_{\md}\cl\md\to\md$, $\,\id_{T\md}\cl T\md\to T\md$.

$Y\cl\md\times \md\to \md\times\md$ obeys the Yang--Baxter equation
\begin{equation}
\lb{yby}
Y^{12}\circ Y^{13}\circ Y^{23}=Y^{23}\circ Y^{13}\circ Y^{12},
\end{equation}
where the maps $Y^{12},Y^{13},Y^{23}\cl\md\times\md\times\md\to\md\times\md\times\md$ 
can be described as follows
\begin{gather}
\lb{y12y23}
Y^{12}=Y\times\id_{\md},\qqquad Y^{23}=\id_{\md}\times Y,\\
\lb{y13}
Y^{13}=P^{12}\circ (\id_{\md}\times Y)\circ P^{12}.
\end{gather}

We need to prove that the map $\dft Y\cl T\md\times T\md\to T\md\times T\md$
obeys the Yang--Baxter equation
\begin{equation}
\lb{ybedy}
(\dft Y)^{12}\circ (\dft Y)^{13}\circ (\dft Y)^{23}=(\dft Y)^{23}\circ (\dft Y)^{13}\circ (\dft Y)^{12},
\end{equation}
where
\begin{gather*}
(\dft Y)^{12}=\dft Y\times\id_{T\md},\qqquad (\dft Y)^{23}=\id_{T\md}\times\dft Y,\qqquad
(\dft Y)^{13}=\tilde{P}^{12}\circ (\id_{T\md}\times\dft Y)\circ\tilde{P}^{12}.
\end{gather*}
By Lemma~\ref{lcdiff}, 
\begin{gather}
\lb{dddd}
\dft(Y^{12}\circ Y^{13}\circ Y^{23})=
\dft(Y^{12})\circ\dft(Y^{13})\circ\dft(Y^{23}),\quad
\dft(Y^{23}\circ Y^{13}\circ Y^{12})=
\dft(Y^{23})\circ\dft(Y^{13})\circ\dft(Y^{12}).
\end{gather}
Taking the differential of~\er{yby} and using~\er{dddd}, we obtain
\begin{gather}
\lb{dd}
\dft(Y^{12})\circ\dft(Y^{13})\circ\dft(Y^{23})=
\dft(Y^{23})\circ\dft(Y^{13})\circ\dft(Y^{12}).
\end{gather}

From~\er{y12y23} one derives 
\begin{gather*}
\dft(Y^{12})=\dft(Y\times\id_{\md})=\dft Y\times\id_{T\md}=(\dft Y)^{12},\quad
\dft(Y^{23})=\dft(\id_{\md}\times Y)=\id_{T\md}\times\dft Y=(\dft Y)^{23}.
\end{gather*}
Using Lemma~\ref{lcdiff} and the relation $\dft(P^{12})=\tilde{P}^{12}$, from~\er{y13} we obtain
$$
\dft(Y^{13})=\dft\big(P^{12}\circ (\id_{\md}\times Y)\circ P^{12}\big)=
\dft(P^{12})\circ\dft(\id_{\md}\times Y)\circ\dft(P^{12})
=\tilde{P}^{12}\circ (\id_{T\md}\times\dft Y)\circ\tilde{P}^{12}=(\dft Y)^{13}.
$$
Thus, we have 
\begin{gather}
\lb{dyij}
\dft(Y^{12})=(\dft Y)^{12},\qqquad \dft(Y^{13})=(\dft Y)^{13},\qqquad \dft(Y^{23})=(\dft Y)^{23}.
\end{gather}
Substituting~\er{dyij} in~\er{dd}, one obtains~\er{ybedy}.
\end{proof}
The statement of Proposition~\ref{thdyb} was used (without proof) in~\cite{BIKRP}.

\begin{proposition}
\label{thdybtet}
Let $\md$ be a manifold.
For any tetrahedron map 
$\mathbf{T}\cl\md\times \md\times\md\to\md\times\md\times\md$, 
the differential 
$$
\dft\mathbf{T}\cl T\md\times T\md\times T\md\to T\md\times T\md\times T\md
$$ 
is a tetrahedron map of the manifold $T\md\times T\md\times T\md$.
\end{proposition}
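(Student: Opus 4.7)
The plan is to imitate the proof of Proposition~\ref{thdyb}, adapted from three factors of $\md$ (with three operators per side of the Yang--Baxter equation) to six factors and four operators per side of the tetrahedron equation. The key observation, as in the Yang--Baxter case, is that each place-indexed operator $\mathbf{T}^{ijk}\cl\md^6\to\md^6$ is a conjugate of $\mathbf{T}\times\id_{\md^3}$ by a permutation map of the factors, and that differentiation commutes with such conjugations via the identification $T(\md^n)\cong(T\md)^n$.

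Concretely, for each $ijk\in\{123,145,246,356\}$ I would introduce the permutation map $\sigma_{ijk}\cl\md^6\to\md^6$ which sends the factors with indices $i,j,k$ to positions $1,2,3$ and the remaining three factors to positions $4,5,6$ (in some fixed order), together with the corresponding permutation $\tilde{\sigma}_{ijk}\cl(T\md)^6\to(T\md)^6$ on the tangent bundle. Under the natural identification $T(\md^6)\cong(T\md)^6$, one has $\dft(\sigma_{ijk})=\tilde{\sigma}_{ijk}$ and $\dft(\sigma_{ijk}^{-1})=\tilde{\sigma}_{ijk}^{-1}$. By construction,
$$
\mathbf{T}^{ijk}=\sigma_{ijk}^{-1}\circ(\mathbf{T}\times\id_{\md^3})\circ\sigma_{ijk},\qquad
(\dft\mathbf{T})^{ijk}=\tilde{\sigma}_{ijk}^{-1}\circ(\dft\mathbf{T}\times\id_{(T\md)^3})\circ\tilde{\sigma}_{ijk}.
$$
Applying Lemma~\ref{lcdiff} to the first equality, together with $\dft(\mathbf{T}\times\id_{\md^3})=\dft\mathbf{T}\times\id_{(T\md)^3}$, one obtains the crucial identification $\dft(\mathbf{T}^{ijk})=(\dft\mathbf{T})^{ijk}$ for every $ijk\in\{123,145,246,356\}$.

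Then, taking the differential of both sides of the tetrahedron equation~\er{Tetrahedron-eq} for $\mathbf{T}$ and using Lemma~\ref{lcdiff} repeatedly to distribute it over the eightfold compositions, one obtains
$$
\dft(\mathbf{T}^{123})\circ\dft(\mathbf{T}^{145})\circ\dft(\mathbf{T}^{246})\circ\dft(\mathbf{T}^{356})=
\dft(\mathbf{T}^{356})\circ\dft(\mathbf{T}^{246})\circ\dft(\mathbf{T}^{145})\circ\dft(\mathbf{T}^{123}).
$$
Substituting $\dft(\mathbf{T}^{ijk})=(\dft\mathbf{T})^{ijk}$ into this relation yields the tetrahedron equation for $\dft\mathbf{T}$ on $T\md\times T\md\times T\md$, which is the desired conclusion.

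There is no substantive obstacle: the proof is a direct extension of the Yang--Baxter argument. The only bookkeeping item is the verification $\dft(\mathbf{T}^{ijk})=(\dft\mathbf{T})^{ijk}$ for the four relevant triples, and this reduces to the immediate fact that differentiation commutes with permutation maps of Cartesian factors under $T(\md^n)\cong(T\md)^n$; this plays the same role here that the identity $\dft(P^{12})=\tilde{P}^{12}$ played in the proof of Proposition~\ref{thdyb}.
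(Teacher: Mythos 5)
Your proposal is correct and follows essentially the same route as the paper: differentiate the tetrahedron equation using Lemma~\ref{lcdiff}, and identify $\dft(\mathbf{T}^{ijk})=(\dft\mathbf{T})^{ijk}$ by writing each $\mathbf{T}^{ijk}$ as a conjugate of $\mathbf{T}\times\id$ by permutations of factors, whose differentials are the corresponding permutations on tangent factors. The only difference is cosmetic: the paper conjugates by an explicit composition of transpositions $P^{ij}$ (illustrated for $\mathbf{T}^{246}$), whereas you use a single permutation $\sigma_{ijk}$ per triple, which amounts to the same argument.
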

\begin{proof}
The map $\mathbf{T}\cl\md\times \md\times\md\to\md\times\md\times\md$
obeys the the tetrahedron equation
\begin{equation}\label{bftetr}
\mathbf{T}^{123}\circ \mathbf{T}^{145} \circ\mathbf{T}^{246}\circ \mathbf{T}^{356}
=\mathbf{T}^{356}\circ \mathbf{T}^{246}\circ\mathbf{T}^{145}\circ \mathbf{T}^{123}.
\end{equation}
We need to show that 
$\dft\mathbf{T}\cl T\md\times T\md\times T\md\to T\md\times T\md\times T\md$
satisfies the tetrahedron equation
\begin{equation}\label{dbftetr}
(\dft\mathbf{T})^{123}\circ(\dft\mathbf{T})^{145}\circ
(\dft\mathbf{T})^{246}\circ(\dft\mathbf{T})^{356}
=(\dft\mathbf{T})^{356}\circ(\dft\mathbf{T})^{246}
\circ(\dft\mathbf{T})^{145}\circ(\dft\mathbf{T})^{123}.
\end{equation}
Here for $1\le i<j<k\le 6$ the map $(\dft\mathbf{T})^{ijk}\cl (T\md)^6\to (T\md)^6$ is constructed 
from~$\dft\mathbf{T}$ similarly to the construction of~$\mathbf{T}^{ijk}$ from~$\mathbf{T}$.

Taking the differential of~\er{bftetr} and using Lemma~\ref{lcdiff}, we derive
\begin{equation}\label{dt123}
\dft(\mathbf{T}^{123})\circ\dft(\mathbf{T}^{145})\circ\dft(\mathbf{T}^{246})
\circ\dft(\mathbf{T}^{356})
=\dft(\mathbf{T}^{356})\circ\dft(\mathbf{T}^{246})\circ\dft(\mathbf{T}^{145})
\circ\dft(\mathbf{T}^{123}).
\end{equation}
Similarly to obtaining~\er{dyij}, one can show the following
\begin{gather}
\lb{dtdt}
\dft(\mathbf{T}^{123})=(\dft\mathbf{T})^{123},\qquad
\dft(\mathbf{T}^{145})=(\dft\mathbf{T})^{145},\qquad
\dft(\mathbf{T}^{246})=(\dft\mathbf{T})^{246},\qquad
\dft(\mathbf{T}^{356})=(\dft\mathbf{T})^{356}.
\end{gather}
For example, let us prove $\dft(\mathbf{T}^{246})=(\dft\mathbf{T})^{246}$.
For any $i,j\in\{1,2,\dots,6\}$, $i<j$, let $P^{ij}\cl(\md)^6\to(\md)^6$ 
be the permutation map which interchanges 
the $i$th and $j$th factors of the Cartesian product~$(\md)^6$.
Then 
$$
\tilde{P}^{ij}=\dft(P^{ij})\cl(T\md)^6\to(T\md)^6
$$
is the permutation map of the same type for the Cartesian product~$(T\md)^6$.
We have
\begin{gather*}
\mathbf{T}^{246}=P^{45}\circ P^{23}\circ P^{34}\circ 
(\id_{(\md)^3}\times\mathbf{T})\circ P^{34}\circ P^{23}\circ P^{45},\\
(\dft\mathbf{T})^{246}=\tilde{P}^{45}\circ \tilde{P}^{23}\circ \tilde{P}^{34}\circ 
(\id_{(T\md)^3}\times\dft\mathbf{T})\circ \tilde{P}^{34}\circ \tilde{P}^{23}\circ \tilde{P}^{45}.
\end{gather*}
Using these formulas, Lemma~\ref{lcdiff}, 
and the relation $\dft(P^{ij})=\tilde{P}^{ij}$, we obtain
\begin{multline*}
\dft(\mathbf{T}^{246})=\dft\big(P^{45}\circ P^{23}\circ P^{34}\circ 
(\id_{(\md)^3}\times\mathbf{T})\circ P^{34}\circ P^{23}\circ P^{45}\big)=\\
=\dft(P^{45})\circ\dft(P^{23})\circ\dft(P^{34})\circ 
\dft(\id_{(\md)^3}\times\mathbf{T})\circ\dft(P^{34})\circ\dft(P^{23})\circ\dft(P^{45})=\\
=\tilde{P}^{45}\circ \tilde{P}^{23}\circ \tilde{P}^{34}\circ 
(\id_{(T\md)^3}\times\dft\mathbf{T})\circ \tilde{P}^{34}\circ \tilde{P}^{23}\circ \tilde{P}^{45}=
(\dft\mathbf{T})^{246}.
\end{multline*}

Similarly, one can prove all of~\er{dtdt}. 
Substituting~\er{dtdt} in~\er{dt123}, one obtains~\er{dbftetr}.
\end{proof}

\begin{corollary}
\lb{ctaaa}
Consider a manifold $\md$, a tetrahedron map
$\mathbf{T}\cl\md\times \md\times\md\to\md\times\md\times\md$, and 
its differential
$$
\dft\mathbf{T}\cl T\md\times T\md\times T\md\to T\md\times T\md\times T\md.
$$

Let $a\in\md$ such that $\mathbf{T}\big((a,a,a)\big)=(a,a,a)$.
Consider the tangent space $T_a\md\subset T\md$ at the point $a\in\md$.
Then we have
\begin{gather}
\lb{dts}
\dft\mathbf{T}(T_a\md\times T_a\md\times T_a\md)\subset
T_a\md\times T_a\md\times T_a\md\subset T\md\times T\md\times T\md,
\end{gather}
and the map 
\begin{gather}
\lb{dtaa}
\dft\mathbf{T}\big|_{(a,a,a)}\cl
T_a\md\times T_a\md\times T_a\md\to T_a\md\times T_a\md\times T_a\md
\end{gather}
is a linear tetrahedron map. Here $\dft\mathbf{T}\big|_{(a,a,a)}$ is 
the restriction of the map $\dft\mathbf{T}$ to $T_a\md\times T_a\md\times T_a\md$. 
\end{corollary}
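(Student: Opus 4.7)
The plan is to deduce the corollary directly from Proposition~\ref{thdybtet} together with the bundle structure of $T\md\to\md$. I would first observe that $\dft\mathbf{T}$ is a bundle morphism covering $\mathbf{T}$: under the natural projection $T\md\times T\md\times T\md\to\md\times\md\times\md$, the map $\dft\mathbf{T}$ lies over $\mathbf{T}$. Since $\mathbf{T}(a,a,a)=(a,a,a)$, the fibre over $(a,a,a)$, which is precisely $T_a\md\times T_a\md\times T_a\md$, is sent by $\dft\mathbf{T}$ into the fibre over $\mathbf{T}(a,a,a)=(a,a,a)$, that is, back into itself. This proves the inclusion~\eqref{dts} and allows us to define the restriction~\eqref{dtaa}, which is linear because the differential at a point is, by construction, a linear map between tangent spaces.

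The remaining task is to verify that~\eqref{dtaa} satisfies the tetrahedron equation. By Proposition~\ref{thdybtet}, $\dft\mathbf{T}$ itself obeys the tetrahedron equation on $(T\md)^6$. For each triple $1\le i<j<k\le 6$, the map $(\dft\mathbf{T})^{ijk}\cl(T\md)^6\to(T\md)^6$ covers $\mathbf{T}^{ijk}\cl\md^6\to\md^6$ under the natural projection $(T\md)^6\to\md^6$. Since $\mathbf{T}(a,a,a)=(a,a,a)$ and the identity fixes every point, each $\mathbf{T}^{ijk}$ fixes the diagonal point $(a,a,a,a,a,a)\in\md^6$. Hence each $(\dft\mathbf{T})^{ijk}$ preserves the fibre $(T_a\md)^6$ over this point, and its restriction there coincides with $\bigl(\dft\mathbf{T}|_{(a,a,a)}\bigr)^{ijk}$. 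Restricting the tetrahedron equation for $\dft\mathbf{T}$ to $(T_a\md)^6$ therefore yields the tetrahedron equation for the linear map~\eqref{dtaa}.

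The only point requiring care is bookkeeping: one should check that every intermediate composition on either side of the tetrahedron equation preserves $(T_a\md)^6$, so that the chain of restrictions is well defined at each stage. This is automatic, because every $\mathbf{T}^{ijk}$, and therefore every composition of these maps, fixes the diagonal point $(a,a,a,a,a,a)\in\md^6$. Thus no serious obstacle arises, and the argument reduces to applying Proposition~\ref{thdybtet} and restricting to the fixed fibre.
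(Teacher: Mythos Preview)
Your proposal is correct and follows essentially the same approach as the paper: invoke Proposition~\ref{thdybtet} and restrict to the fibre over the fixed diagonal point. The paper's proof is considerably terser---it simply asserts that the restriction of the tetrahedron map $\dft\mathbf{T}$ to the invariant subset $T_a\md\times T_a\md\times T_a\md$ is again a tetrahedron map---whereas you have spelled out the justification that each $(\dft\mathbf{T})^{ijk}$ preserves $(T_a\md)^6$ and restricts correctly, which is a welcome clarification.
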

\begin{proof}
The property $\mathbf{T}\big((a,a,a)\big)=(a,a,a)$ and the definition 
of the differential imply~\er{dts} and the fact that the map~\er{dtaa} is linear.

By Proposition~\ref{thdybtet}, the differential $\dft\mathbf{T}$ is a tetrahedron map.
Therefore, its restriction $\dft\mathbf{T}\big|_{(a,a,a)}$ to 
$T_a\md\times T_a\md\times T_a\md$ is a tetrahedron map as well.
\end{proof}

\begin{remark}
\lb{rlappr}
The definition of the differential implies that 
the linear tetrahedron map $\dft\mathbf{T}\big|_{(a,a,a)}$ 
described in Corollary~\ref{ctaaa} can be regarded as a linear approximation of
the nonlinear tetrahedron map~$\mathbf{T}$ at the point $(a,a,a)\in\md\times\md\times\md$.
Explicit examples of $\dft\mathbf{T}\big|_{(a,a,a)}$ are 
presented in Examples~\ref{eeltr},~\ref{edmh}.
\end{remark}

Let $W$ be a set.
Three maps $G,H,Q\cl W\times W\to W\times W$ are called 
\emph{entwining Yang--Baxter maps} if they satisfy
\begin{gather}
\lb{ghqqhg}
G^{12}\circ H^{13}\circ Q^{23}=Q^{23}\circ H^{13}\circ G^{12}
\end{gather}
(see, e.g.,~\cite{KoulPap11,Pavlos2019,KP2019}).
The maps $G^{12},H^{13},Q^{23}\cl W\times W\times W\to W\times W\times W$
are constructed from $G$, $H$, $Q$ in the standard way.
One has~\er{ghqqhg},
but the maps $G$, $H$, $Q$ individually do not necessarily satisfy the Yang--Baxter equation.
Similarly to Proposition~\ref{thdyb}, one can prove the following.
\begin{proposition}
\label{pentw}
Let $\md$ be a manifold.
For any entwining Yang--Baxter maps $G,H,Q\cl \md\times\md\to\md\times\md$,
the differentials $\dft G,\dft H,\dft Q\cl T\md\times T\md\to T\md\times T\md$
are entwining Yang--Baxter maps of the manifold $T\md\times T\md$.
\end{proposition}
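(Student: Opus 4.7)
The plan is to mimic the proof of Proposition~\ref{thdyb} almost verbatim, since that argument is completely modular: nothing in it actually used the fact that the three maps appearing in the Yang--Baxter equation were all equal to one and the same~$Y$. What was used is only (i) the chain rule for differentials (Lemma~\ref{lcdiff}), and (ii) the identification of the ``index-carrying'' maps $Y^{12}$, $Y^{13}$, $Y^{23}$ with their constructed versions in terms of identities and transpositions. Both ingredients carry over verbatim to the entwining setting.

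First I would take the differential of the defining relation
\begin{equation*}
G^{12}\circ H^{13}\circ Q^{23}=Q^{23}\circ H^{13}\circ G^{12}
\end{equation*}
and apply Lemma~\ref{lcdiff} to both sides, obtaining
\begin{equation*}
\dft(G^{12})\circ\dft(H^{13})\circ\dft(Q^{23})=
\dft(Q^{23})\circ\dft(H^{13})\circ\dft(G^{12}).
\end{equation*}
Next I would establish the three identifications
\begin{equation*}
\dft(G^{12})=(\dft G)^{12},\qquad \dft(H^{13})=(\dft H)^{13},\qquad \dft(Q^{23})=(\dft Q)^{23},
\end{equation*}
exactly as in the proof of Proposition~\ref{thdyb}. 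Namely, $G^{12}=G\times\id_{\md}$ and $Q^{23}=\id_{\md}\times Q$ are handled directly by the product rule for the differential, while $H^{13}=P^{12}\circ(\id_{\md}\times H)\circ P^{12}$ is handled by combining the chain rule with the observation $\dft(P^{12})=\tilde{P}^{12}$, where $\tilde P^{12}$ is the analogous transposition on $T\md\times T\md\times T\md$. Substituting these three equalities into the previous display yields the required entwining Yang--Baxter relation for $\dft G$, $\dft H$, $\dft Q$.

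I do not foresee a genuine obstacle here; the only thing to check carefully is that the proof of Proposition~\ref{thdyb} never exploited the coincidence of the three factors, so that replacing $Y$ by three distinct but compatibly-typed maps $G,H,Q\cl \md\times\md\to\md\times\md$ leaves each step intact. In particular, the two key facts $\dft(f\times g)=\dft f\times\dft g$ and $\dft(P^{12})=\tilde{P}^{12}$ depend only on the domains and codomains, not on which specific map plays which role, so the argument goes through without modification.
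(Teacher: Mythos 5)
Your proposal is correct and matches the paper's intent exactly: the paper itself only states that Proposition~\ref{pentw} is proved ``similarly to Proposition~\ref{thdyb}'', and your argument is precisely that adaptation, using Lemma~\ref{lcdiff} together with the identifications $\dft(G^{12})=(\dft G)^{12}$, $\dft(H^{13})=(\dft H)^{13}$, $\dft(Q^{23})=(\dft Q)^{23}$. Nothing is missing.
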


Let $n\in\zsp$. Let $\md$ be an $n$-dimensional manifold 
with (local) coordinates $x_1,\dots,x_n$.
Then $\dim T\md =2n$, and we have the (local) coordinates $x_1,\dots,x_n,X_1,\dots,X_n$ 
on the manifold~$T\md$, where $X_i$ corresponds to the differential $\dft x_i$,
which can be regarded as a function on~$T\md$. 
(Thus, the functions $X_1,\dots,X_n$ are linear along the fibres of the bundle $T\md\to\md$.)

To study maps of the form 
$$
\md\times \md\times\md\to\md\times\md\times\md,\qqquad
T\md\times T\md\times T\md\to T\md\times T\md\times T\md,
$$
we consider 
\begin{itemize}
	\item $6$ copies of the manifold $\md$ with coordinate systems
\begin{gather*}
(x_1,\dots,x_n),\qquad(y_1,\dots,y_n),\qquad(z_1,\dots,z_n),\qquad
(\tilde{x}_1,\dots,\tilde{x}_n),\qquad(\tilde{y}_1,\dots,\tilde{y}_n),\qquad(\tilde{z}_1,\dots,\tilde{z}_n),
\end{gather*}
\item $6$ copies of the manifold $T\md$ with coordinate systems
\begin{gather*}
(x_1,\dots,x_n,X_1,\dots,X_n),\qqquad(y_1,\dots,y_n,Y_1,\dots,Y_n),\qqquad(z_1,\dots,z_n,Z_1,\dots,Z_n),\\
(\tilde{x}_1,\dots,\tilde{x}_n,\tilde{X}_1,\dots,\tilde{X}_n),\qqquad
(\tilde{y}_1,\dots,\tilde{y}_n,\tilde{Y}_1,\dots,\tilde{Y}_n),\qqquad
(\tilde{z}_1,\dots,\tilde{z}_n,\tilde{Z}_1,\dots,\tilde{Z}_n).
\end{gather*}
\end{itemize}
Here, for each $i=1,\dots,n$, the functions $X_i$, $Y_i$, $Z_i$, $\tilde{X}_i$, $\tilde{Y}_i$, $\tilde{Z}_i$
correspond to the differentials $\dft x_i$, $\dft y_i$, $\dft z_i$, $\dft\tilde{x}_i$, $\dft\tilde{y}_i$, $\dft\tilde{z}_i$.
Below we use the following  notation
\begin{gather*}
x=(x_1,\dots,x_n),\qquad y=(y_1,\dots,y_n),\qquad z=(z_1,\dots,z_n),\\
X=(X_1,\dots,X_n),\qqquad Y=(Y_1,\dots,Y_n),\qqquad Z=(Z_1,\dots,Z_n),\\
\tilde{x}=(\tilde{x}_1,\dots,\tilde{x}_n),\qquad
\tilde{y}=(\tilde{y}_1,\dots,\tilde{y}_n),\qquad\tilde{z}=(\tilde{z}_1,\dots,\tilde{z}_n),\\
\tilde{X}=(\tilde{X}_1,\dots,\tilde{X}_n),\qqquad
\tilde{Y}=(\tilde{Y}_1,\dots,\tilde{Y}_n),\qqquad
\tilde{Z}=(\tilde{Z}_1,\dots,\tilde{Z}_n).
\end{gather*}

Consider a tetrahedron map
\begin{gather*}
\mathbf{T}\cl\md\times \md\times\md\to\md\times\md\times\md,\qqquad
(x,y,z)\mapsto(\tilde{x},\tilde{y},\tilde{z}),\\
\tilde{x}_i=f_i(x,y,z),\qqquad
\tilde{y}_i=g_i(x,y,z),\qqquad
\tilde{z}_i=h_i(x,y,z),\qqquad i=1,\dots,n.
\end{gather*}
Its differential is the following tetrahedron map 
\begin{gather*}
\dft\mathbf{T}\cl T\md\times T\md\times T\md\to T\md\times T\md\times T\md,\qqquad
(x,X,y,Y,z,Z)\mapsto(\tilde{x},\tilde{X},\tilde{y},\tilde{Y},\tilde{z},\tilde{Z}),\\
\tilde{x}_i=f_i(x,y,z),\qqquad
\tilde{y}_i=g_i(x,y,z),\qqquad
\tilde{z}_i=h_i(x,y,z),\qqquad i=1,\dots,n,\\
\tilde{X}_i=\sum_{j=1}^n\Big(\frac{\partial f_i(x,y,z)}{\partial x_j}X_j
+\frac{\partial f_i(x,y,z)}{\partial y_j}Y_j+\frac{\partial f_i(x,y,z)}{\partial z_j}Z_j\Big),\\
\tilde{Y}_i=\sum_{j=1}^n\Big(\frac{\partial g_i(x,y,z)}{\partial x_j}X_j
+\frac{\partial g_i(x,y,z)}{\partial y_j}Y_j+\frac{\partial g_i(x,y,z)}{\partial z_j}Z_j\Big),\\
\tilde{Z}_i=\sum_{j=1}^n\Big(\frac{\partial h_i(x,y,z)}{\partial x_j}X_j
+\frac{\partial h_i(x,y,z)}{\partial y_j}Y_j+\frac{\partial h_i(x,y,z)}{\partial z_j}Z_j\Big).
\end{gather*}
Note that the map $\dft\mathbf{T}$ is linear with respect to $X$, $Y$, $Z$ and, in general, is nonlinear 
with respect to $x$, $y$,~$z$.

\begin{example}
\lb{eeltr}
Let $n=\dim\md=1$.
Consider the well-known electric network transformation
\begin{gather}
\lb{elnt1}
\mathbf{T}\cl\md\times \md\times\md\to\md\times\md\times\md,\qqquad
(x,y,z)\mapsto (\tilde{x},\tilde{y},\tilde{z}),\\
\lb{elnt2}
\tilde{x}=\frac{x y}{x + z + x y z},\qqquad
\tilde{y}=x + z + x y z,\qqquad
\tilde{z}=\frac{y z}{x y z+x+z},
\end{gather}
which is a tetrahedron map~\cite{Sergeev,Kashaev-Sergeev}.
Its differential is the following tetrahedron map 
\begin{gather}
\lb{difelt}
\dft\mathbf{T}\cl T\md\times T\md\times T\md\to T\md\times T\md\times T\md,\qqquad
(x,X,y,Y,z,Z)\mapsto(\tilde{x},\tilde{X},\tilde{y},\tilde{Y},\tilde{z},\tilde{Z}),\\
\notag
\tilde{x}=\frac{x y}{x + z + x y z},\qqquad
\tilde{y}=x + z + x y z,\qqquad
\tilde{z}=\frac{y z}{x y z+x+z},\\
\lb{delnt1}
\tilde{X}=\frac{- x y (1 + x y)Z +  y zX +  x (x + z)Y}{(x y z+x+z)^2},\qqquad
\tilde{Y}=X + Z + x y Z+ x zY + y zX,\\
\lb{delnt2}
\tilde{Z}=\frac{- y z (y z+1)X+z (x+z)Y+x yZ}{(x y z+x+z)^2}.
\end{gather}

We assume that $x,y,z,\tilde{x},\tilde{y},\tilde{z}$ take values in~$\mathbb{C}$, 
so $\md$ is a complex manifold.
Consider $\mathrm{i}=\sqrt{-1}\in\mathbb{C}$.
Formulas~\er{elnt1},~\er{elnt2} imply 
$\mathbf{T}\big((\mathrm{i},\mathrm{i},\mathrm{i})\big)=
(\mathrm{i},\mathrm{i},\mathrm{i})$.

Let $a=\mathrm{i}$.
The coordinate system on~$\md$ gives the isomorphism $T_a\md\cong\mathbb{C}$.
By Corollary~\ref{ctaaa}, we obtain the linear tetrahedron map 
$\dft\mathbf{T}\big|_{(\mathrm{i},\mathrm{i},\mathrm{i})}\cl 
\mathbb{C}^3\to\mathbb{C}^3$. To compute it, 
we substitute $x=y=z=\mathrm{i}$ in~\er{delnt1},~\er{delnt2} and derive 
the linear map
\begin{gather}
\lb{xtx}
\begin{pmatrix}
     X  \\
     Y  \\
     Z
\end{pmatrix}\mapsto
\begin{pmatrix}
     \tilde{X}  \\
     \tilde{Y}  \\
     \tilde{Z}
\end{pmatrix}=
\begin{pmatrix}
     X+2Y  \\
     -Y  \\
     2Y+Z
\end{pmatrix}
\end{gather}
with the matrix 
$\begin{pmatrix}
     A&B&C  \\
     D&E&F  \\
     K&L&M
\end{pmatrix}=
\begin{pmatrix}
     1&2&0  \\
     0&-1&0  \\
     0&2&1
\end{pmatrix}$,
which is of the form~\er{hietm} for $\ah=1$, $\bh=-1$, $\ch=1$.

The linear tetrahedron map~\er{xtx} is a linear approximation 
of the map~\er{elnt1},~\er{elnt2} at the point $(\mathrm{i},\mathrm{i},\mathrm{i})$
in the following sense. We have
\begin{gather*}
\mathbf{T}\big((\mathrm{i}+\varepsilon X,\,
\mathrm{i}+\varepsilon Y,\,\mathrm{i}+\varepsilon Z)\big)=
\big(\mathrm{i}+\varepsilon(X+2Y)+\mathcal{O}(\varepsilon^2),\,
\mathrm{i}-\varepsilon Y+\mathcal{O}(\varepsilon^2),\,
\mathrm{i}+\varepsilon(2Y+Z)+\mathcal{O}(\varepsilon^2)\big).
\end{gather*}

\end{example}

\begin{example}
\lb{eknpt}
Let $n=\dim\md=2$.
Consider the Kassotakis--Nieszporski--Papageorgiou--Tongas map 
(map (33) in~\cite{Kassotakis-Tetrahedron})
\begin{gather*}
\mathbf{T}\cl\md\times \md\times\md\to\md\times\md\times\md,\qqquad
(x_1,x_2,y_1,y_2,z_1,z_2)\mapsto 
(\tilde{x}_1,\tilde{x}_2,\tilde{y}_1,\tilde{y}_2,\tilde{z}_1,\tilde{z}_2),\\
\tilde{x}_1=\frac{y_1+x_1 z_1}{z_1},\,\quad
\tilde{x}_2=y_2 z_1,\,\quad
\tilde{y}_1=x_1 z_1,\,\quad 
\tilde{y}_2=\frac{(y_1+x_1 z_1) z_2}{z_1},\,\quad 
\tilde{z}_1=\frac{y_1 z_1}{y_1+x_1 z_1},\,\quad 
\tilde{z}_2=\frac{y_2}{x_1}.
\end{gather*}
Its differential is the following tetrahedron map 
\begin{gather}
\notag
\dft\mathbf{T}\cl T\md\times T\md\times T\md\to T\md\times T\md\times T\md,\\
\lb{difkm}
(x_1,x_2,X_1,X_2,y_1,y_2,Y_1,Y_2,z_1,z_2,Z_1,Z_2)\mapsto
(\tilde{x}_1,\tilde{x}_2,\tilde{X}_1,\tilde{X}_2,\tilde{y}_1,\tilde{y}_2,
\tilde{Y}_1,\tilde{Y}_2,\tilde{z}_1,\tilde{z}_2,\tilde{Z}_1,\tilde{Z}_2),\\
\notag
\tilde{x}_1=\frac{y_1+x_1 z_1}{z_1},\,\quad
\tilde{x}_2=y_2 z_1,\,\quad
\tilde{y}_1=x_1 z_1,\,\quad 
\tilde{y}_2=\frac{(y_1+x_1 z_1) z_2}{z_1},\,\quad 
\tilde{z}_1=\frac{y_1 z_1}{y_1+x_1 z_1},\,\quad 
\tilde{z}_2=\frac{y_2}{x_1},\\
\notag
\tilde{X}_1=X_1 + \frac{1}{z_1}Y_1-\frac{y_1}{z_1^2}Z_1,\qqquad
\tilde{X}_2=z_1Y_2+y_2Z_1,\\
\notag
\tilde{Y}_1=z_1X_1+x_1Z_1,\qqquad
\tilde{Y}_2=z_2X_1+\frac{z_2}{z_1}Y_1-\frac{y_1 z_2}{z_1^2}Z_1
+\frac{y_1 + x_1 z_1}{z_1}Z_2,\\ 
\notag
\tilde{Z}_1=-\frac{y_1z_1^2}{(y_1 + x_1 z_1)^2}X_1+
\frac{x_1z_1^2}{(y_1 + x_1 z_1)^2}Y_1+
\frac{y_1^2}{(y_1 + x_1 z_1)^2}Z_1,\qqquad
\tilde{Z}_2=-\frac{ y_2}{x_1^2}X_1+\frac{1}{x_1}Y_2.
\end{gather}
\end{example}

\begin{example}
\lb{edmh}
Let $n=\dim\md=2$.
In a study of soliton solutions of vector KP equations,
Dimakis and M\"uller-Hoissen~\cite{Dimakis} 
constructed the tetrahedron map 
\begin{gather}
\lb{dmt}
\mathbf{T}\cl\md\times \md\times\md\to\md\times\md\times\md,\qqquad
(x_1,x_2,y_1,y_2,z_1,z_2)\mapsto 
(\tilde{x}_1,\tilde{x}_2,\tilde{y}_1,\tilde{y}_2,\tilde{z}_1,\tilde{z}_2),\\
\notag
\tilde{x}_1=y_1 \mathsf{C},\qqquad
\tilde{x}_2=\Big(y_1 - \frac{\mathsf{A}}{x_1}\Big) \mathsf{C},\qqquad
\tilde{y}_1=\frac{x_1}{\mathsf{C}},\qqquad 
\tilde{y}_2=1 - \mathsf{B},\\ 
\notag
\tilde{z}_1=\frac{z_1 y_1(x_1-x_2)}{\mathsf{A}},\qqquad 
\tilde{z}_2=1 - \frac{(1-y_2) (1-z_2)}{\mathsf{B}},\\
\notag
\mathsf{A} = y_2 z_1 x_1-y_2 x_1-z_1 x_2+x_1 y_1,\qqquad 
\mathsf{B} = y_2 z_2 x_1-y_2 x_1-z_2 x_2+1,\\
\notag
\mathsf{C} = \frac{\mathsf{A} \mathsf{B}-\mathsf{A} (1-y_2) (1-z_2) x_1-\mathsf{B} z_1(x_1-x_2)}{\mathsf{A} \mathsf{B}-\mathsf{A} (1-y_2) (1-z_2)-\mathsf{B} z_1 y_1(x_1-x_2)}.
\end{gather}

We have found the following invariants for this map
\begin{gather}
\lb{invdm}
I_1(x_1,x_2,y_1,y_2,z_1,z_2) = x_1 y_1,\qqquad
I_2(x_1,x_2,y_1,y_2,z_1,z_2) = (y_2-1) (z_2-1),\\
\lb{invdm3}
I_3(x_1,x_2,y_1,y_2,z_1,z_2) = (x_1 - x_2) (y_1 - y_2) (z_1 - z_2).
\end{gather}
That is, for $\tilde{x}_1,\tilde{x}_2,\tilde{y}_1,\tilde{y}_2,\tilde{z}_1,\tilde{z}_2$ given by the above formulas, one has 
$$
I_j(\tilde{x}_1,\tilde{x}_2,\tilde{y}_1,\tilde{y}_2,\tilde{z}_1,\tilde{z}_2)=I_j(x_1,x_2,y_1,y_2,z_1,z_2),
\qquad j=1,2,3.
$$
The invariants $I_1$, $I_2$, $I_3$ are functionally independent.

We assume that $x_i,y_i,z_i,\tilde{x}_i,\tilde{y}_i,\tilde{z}_i$ take values in~$\mathbb{C}$, 
so $\md$ is a complex manifold.
One can check that for any nonzero $\ct\in\mathbb{C}$ we have $\mathbf{T}\big((a,a,a)\big)=(a,a,a)$, 
where $a=(\ct,0)\in\md$. 
Therefore, by Corollary~\ref{ctaaa}, one obtains the linear tetrahedron map
\begin{gather*}
\dft\mathbf{T}\big|_{(a,a,a)}\cl
T_a\md\times T_a\md\times T_a\md\to T_a\md\times T_a\md\times T_a\md.
\end{gather*}

The coordinate system on~$\md$ gives the isomorphism $T_a\md\cong\mathbb{C}^2$, 
so we have $\dft\mathbf{T}\big|_{(a,a,a)}\cl\mathbb{C}^6\to\mathbb{C}^6$.
Computing $\dft\mathbf{T}$ and $\dft\mathbf{T}\big|_{(a,a,a)}$ for $a=(\ct,0)$, 
one derives that $\dft\mathbf{T}\big|_{(a,a,a)}$ is given by the matrix 
\begin{gather}
\lb{pmat2}
\begin{pmatrix} 
1 & 0 & \frac{\ct-1}{\ct} & \frac{(\ct-1)^2 (\ct+1)}{\ct} & -\frac{\ct-1}{\ct} & 1-\ct \\
 0 & 1 & 0 & 1-\ct & 0 & 0 \\
 0 & 0 & \frac{1}{\ct} & -\frac{(\ct-1)^2 (\ct+1)}{\ct} & \frac{\ct-1}{\ct} & \ct-1 \\
 0 & 0 & 0 & \ct & 0 & 0 \\
 0 & 0 & 0 & 1-\ct & 1 & 0 \\
 0 & 0 & 0 & 1-\ct & 0 & 1
\end{pmatrix},
\end{gather}
which coincides with~\er{abcp}.
Thus, the linear map $\dft\mathbf{T}\big|_{(a,a,a)}$ is 
of the form~\er{abcp},~\er{pabc}.

According to Remark~\ref{rlappr},
the linear tetrahedron map~\er{pmat2} is a linear approximation of
the nonlinear tetrahedron map~\er{dmt} 
at the point $(a,a,a)\in\md\times\md\times\md$ with $a=(\ct,0)$, $\ct\neq 0$.
\end{example}

\section{Parametric Yang--Baxter maps associated with matrix groups}
\label{sYBmvb}

Let $\gmd$ be a group and $\pw\in\zsp$. 
It is known that one has the following Yang--Baxter map
\begin{gather}
\label{fxyx}
\fmp\colon \gmd\times \gmd\to \gmd\times \gmd,\qquad
\fmp(x,y)=(x,x^\pw yx^{-\pw}),\qquad
x,y\in \gmd,
\end{gather}
(see, e.g.,~\cite{carter2006} and references therein).
For $\pw=1$ this map appeared in~\cite{Drin92}

Assume that $\mathbb{K}$ is either $\mathbb{R}$ or $\mathbb{C}$.
Let $n\in\zsp$ and consider the matrix group
$\gmd=\GL_n(\mathbb{K})\subset\mat_n(\mathbb{K})$.
Then $\gmd$ is a manifold, and for each $x\in\gmd=\GL_n(\mathbb{K})$ 
one has the tangent space $T_x\gmd\cong\mat_n(\mathbb{K})$.
Set $\mm=\mat_n(\mathbb{K})$.
The tangent bundle of the manifold~$\gmd$ 
can be identified with the trivial bundle $\gmd\times\mm\to\gmd$.

For $\gmd=\GL_n(\mathbb{K})$, the Yang--Baxter map~\eqref{fxyx} is an analytic diffeomorphism of 
the manifold $\gmd\times\gmd$.
The differential~$\dft\fmp$ of this diffeomorphism~$\fmp$ can be identified with the following map
\begin{gather}
\label{dfxyx}
\begin{gathered}
\dft\fmp\colon 
(\gmd\times\mm)\times(\gmd\times\mm)
\to
(\gmd\times\mm)\times(\gmd\times\mm),\\
\dft\fmp\big((x,M_1),(y,M_2)\big)=\left(\Big(x,M_1\Big),
\Big(x^\pw yx^{-\pw},\frac{\partial}{\partial\varepsilon}\Big|_{\varepsilon=0}
\big((x+\varepsilon M_1)^\pw(y+\varepsilon M_2)(x+\varepsilon M_1)^{-\pw}\big)\Big)\right),
\end{gathered}\\
\notag
x,y\in\gmd=\GL_n(\mathbb{K}),\qquad\quad M_1,M_2\in\mm=\mat_n(\mathbb{K}).
\end{gather}
By Proposition~\ref{thdyb},
since $\fmp$ is a Yang--Baxter map, its differential $\dft\fmp$ is a Yang--Baxter map as well.

Let $\Omega\subset\gmd$ be an abelian subgroup of~$\gmd$.
Denote by 
$\ybm\colon(\Omega\times\mm)\times(\Omega\times\mm)\to(\Omega\times\mm)\times(\Omega\times\mm)$ 
the restriction of the map $\dft\fmp$ to the subset 
$(\Omega\times\mm)\times(\Omega\times\mm)\subset(\gmd\times\mm)\times(\gmd\times\mm)$.
As $\dft\fmp$ is a Yang--Baxter map, $\ybm$ is a Yang--Baxter map as well.

Let $a,b\in\Omega$. Since $ab=ba$, computing~\eqref{dfxyx} for $x=a$ and $y=b$, we obtain
\begin{gather}
\label{ybmgl}
\begin{gathered}
\ybm\colon
\big(\Omega\times\mat_n(\mathbb{K})\big)\times\big(\Omega\times\mat_n(\mathbb{K})\big)\to
\big(\Omega\times\mat_n(\mathbb{K})\big)\times\big(\Omega\times\mat_n(\mathbb{K})\big),\\
\ybm\big((a,M_1),(b,M_2)\big)=\bigg((a,M_1),
\Big(b,a^\pw M_2a^{-\pw}+
\sum_{i=0}^{\pw-1}\Big(a^iM_1a^{-i-1}b-ba^iM_1a^{-i-1}\Big)\Big)\bigg).
\end{gathered}
\end{gather}
The Yang--Baxter map~\eqref{ybmgl} can be interpreted as 
the following linear parametric Yang--Baxter map
\begin{gather}
\label{pyblg}
\begin{gathered}
\mathbb{Y}_{a,b}\colon\mat_n(\mathbb{K})\times\mat_n(\mathbb{K})\to
\mat_n(\mathbb{K})\times\mat_n(\mathbb{K}),\\
\mathbb{Y}_{a,b}(M_1,M_2)=
\bigg(M_1,\,a^\pw M_2a^{-\pw}+
\sum_{i=0}^{\pw-1}\Big(a^iM_1a^{-i-1}b-ba^iM_1a^{-i-1}\Big)\bigg),
\end{gathered}
\end{gather}
with parameters $a,b\in\Omega$.
We need the following result from~\cite{BIKRP}.
\begin{proposition}[\cite{BIKRP}] 
\label{prl}
Let $V$ be a vector space over a field $\mathbb{K}$.
Consider a linear parametric Yang--Baxter map 
$\mathsf{Y}_{\alpha\beta}\cl V\times V\to V\times V$ given by 
the formula
\begin{equation*}
\mathsf{Y}_{\alpha\beta}\cl\begin{pmatrix}
    x\\
    y
\end{pmatrix}\mapsto\begin{pmatrix}
    u\\
    v
\end{pmatrix} =\begin{pmatrix}
     \mathsf{A}_{\alpha \beta}&\mathsf{B}_{\alpha \beta} \\
     \mathsf{C}_{\alpha \beta}& \mathsf{D}_{\alpha \beta}
\end{pmatrix}
\begin{pmatrix}
    x\\
    y
\end{pmatrix}, \qquad
\mathsf{A}_{\alpha\beta},\mathsf{B}_{\alpha \beta},
\mathsf{C}_{\alpha \beta},\mathsf{D}_{\alpha \beta}\in\End(V),\qquad x,y\in V.
\end{equation*}

Then, for any nonzero constant $l\in\mathbb{K}$, the map 
\begin{gather*}
\mathsf{Y}^l_{\alpha\beta}\cl V\times V\to V\times V,\qqquad
\mathsf{Y}^l_{\alpha\beta}\cl\begin{pmatrix}
    x\\
    y
\end{pmatrix}\mapsto\begin{pmatrix}
    u\\
    v
\end{pmatrix} =\begin{pmatrix}
     l\mathsf{A}_{\alpha \beta}&\mathsf{B}_{\alpha \beta} \\
     \mathsf{C}_{\alpha \beta}& l^{-1}\mathsf{D}_{\alpha \beta}
\end{pmatrix}
\begin{pmatrix}
    x\\
    y
\end{pmatrix},\qqquad x,y\in V,
\end{gather*}
is a parametric Yang--Baxter map as well.
\end{proposition}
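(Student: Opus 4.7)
The plan is to prove the proposition by direct verification of the parametric Yang--Baxter equation for $\mathsf{Y}^l_{\alpha\beta}$. The first step is to derive an analog of Proposition~\ref{Lin-Par-Tetrahedron-Eqs-proposition} for parametric Yang--Baxter maps: by computing the $3\times 3$ block products $\mathsf{Y}^{12}_{\alpha\beta}\circ\mathsf{Y}^{13}_{\alpha\gamma}\circ\mathsf{Y}^{23}_{\beta\gamma}$ and $\mathsf{Y}^{23}_{\beta\gamma}\circ\mathsf{Y}^{13}_{\alpha\gamma}\circ\mathsf{Y}^{12}_{\alpha\beta}$ acting on $(x,y,z)\in V^3$ and equating corresponding blocks, one obtains a finite system of parametric operator equations in $\mathsf{A}_{\alpha\beta}$, $\mathsf{B}_{\alpha\beta}$, $\mathsf{C}_{\alpha\beta}$, $\mathsf{D}_{\alpha\beta}$ (with various parameter assignments) that is equivalent to~\er{pybeq}. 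This step is routine, directly analogous to the derivation of~\er{parametric-linear-tetrahedron-eqs}, and is substantially shorter since only four operator matrices appear instead of nine.

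The crux of the argument is the observation that every equation in this system is \emph{homogeneous} with respect to the $\mathbb{Z}$-grading that assigns weight $+1$ to each occurrence of $\mathsf{A}$, weight $-1$ to each occurrence of $\mathsf{D}$, and weight $0$ to each occurrence of $\mathsf{B}$ and $\mathsf{C}$. Concretely, within the $(i,j)$-block ($i,j\in\{1,2,3\}$) of either side of the YBE, every monomial turns out to have the same total weight: a short block-index analysis shows this common weight equals $f(i)+f(j)$ with $f(1)=1$, $f(2)=0$, $f(3)=-1$. Hence each equation obtained by equating a block on the left with the corresponding block on the right is weight-homogeneous.

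Given homogeneity, the conclusion is immediate. Under the substitution $\mathsf{A}_{\alpha\beta}\mapsto l\mathsf{A}_{\alpha\beta}$, $\mathsf{D}_{\alpha\beta}\mapsto l^{-1}\mathsf{D}_{\alpha\beta}$ (with $\mathsf{B}_{\alpha\beta}, \mathsf{C}_{\alpha\beta}$ unchanged), every monomial of weight $w$ is multiplied by $l^w$, so each equation of weight $w$ is rescaled by a common factor $l^w$ on both sides and is therefore preserved. Thus $\mathsf{Y}^l_{\alpha\beta}$ obeys~\er{pybeq}. The main obstacle I anticipate is a clean formulation of the homogeneity claim that avoids a brute-force scan of the equation list; I would either justify it conceptually via the block-index/weight-counting argument just sketched, or else exhibit the short explicit list (only a handful of block equations appear in the non-parametric case) and inspect each term individually.
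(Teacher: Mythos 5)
Your proposal is correct. Note that the paper does not actually prove Proposition~\ref{prl} --- it is quoted from~\cite{BIKRP} --- but your route is exactly the one the paper uses for its analogous scaling statements, Propositions~\ref{par-family} and~\ref{lin-par-family-abc}: first an explicit list of operator equations equivalent to the relevant equation (the analogue of your step~1 is Proposition~\ref{Lin-Par-Tetrahedron-Eqs-proposition}), then the one-line observation that the scaled map satisfies the same list, which is precisely your homogeneity argument made explicit. Your key claim does check out: writing $Y^{12}_{\alpha\beta}$, $Y^{13}_{\alpha\gamma}$, $Y^{23}_{\beta\gamma}$ as $3\times 3$ block matrices and multiplying, the nine block identities equivalent to~\eqref{pybeq} are
$\mathsf{A}_{\alpha\beta}\mathsf{A}_{\alpha\gamma}=\mathsf{A}_{\alpha\gamma}\mathsf{A}_{\alpha\beta}$,
$\mathsf{A}_{\alpha\beta}\mathsf{B}_{\alpha\gamma}\mathsf{C}_{\beta\gamma}+\mathsf{B}_{\alpha\beta}\mathsf{A}_{\beta\gamma}=\mathsf{A}_{\alpha\gamma}\mathsf{B}_{\alpha\beta}$,
$\mathsf{A}_{\alpha\beta}\mathsf{B}_{\alpha\gamma}\mathsf{D}_{\beta\gamma}+\mathsf{B}_{\alpha\beta}\mathsf{B}_{\beta\gamma}=\mathsf{B}_{\alpha\gamma}$,
$\mathsf{C}_{\alpha\beta}\mathsf{A}_{\alpha\gamma}=\mathsf{A}_{\beta\gamma}\mathsf{C}_{\alpha\beta}+\mathsf{B}_{\beta\gamma}\mathsf{C}_{\alpha\gamma}\mathsf{A}_{\alpha\beta}$, etc., down to
$\mathsf{D}_{\alpha\gamma}\mathsf{D}_{\beta\gamma}=\mathsf{D}_{\beta\gamma}\mathsf{D}_{\alpha\gamma}$,
and each is indeed homogeneous of weight $f(i)+f(j)$ in your grading ($\mathsf{A}\mapsto+1$, $\mathsf{D}\mapsto-1$, $\mathsf{B},\mathsf{C}\mapsto 0$), so the substitution $\mathsf{A}\mapsto l\mathsf{A}$, $\mathsf{D}\mapsto l^{-1}\mathsf{D}$ multiplies both sides of each identity by the same power of~$l$; since the weights are independent of the parameter subscripts, the argument is unaffected by the parametric dependence. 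One small caveat: the purely combinatorial ``block-index'' justification of the weight $f(i)+f(j)$ is not quite automatic (the naive bookkeeping over intermediate indices does not cancel by itself --- it uses which block entries of $Y^{12}$, $Y^{13}$, $Y^{23}$ are actually nonzero), so in a written version the cleanest course is your fallback: display the nine block equations, which is short, and read off the homogeneity by inspection.
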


Let $l\in\mathbb{K}$, $l\neq 0$.
Applying Proposition~\ref{prl} to the map~\eqref{pyblg}, we obtain the linear parametric Yang--Baxter map
\begin{gather}
\label{lpyblg}
\begin{gathered}
Y^l_{a,b}\colon\mat_n(\mathbb{K})\times\mat_n(\mathbb{K})\to
\mat_n(\mathbb{K})\times\mat_n(\mathbb{K}),\\
Y^l_{a,b}(M_1,M_2)=\bigg(lM_1,\,l^{-1}a^\pw M_2a^{-\pw}+
\sum_{i=0}^{\pw-1}\Big(a^iM_1a^{-i-1}b-ba^iM_1a^{-i-1}\Big)\bigg),
\end{gathered}\\
\notag
a,b\in\Omega,\qquad\quad\text{$\Omega$ is an abelian subgroup of $\GL_n(\mathbb{K})$}.
\end{gather}
In the above construction of~\eqref{lpyblg} we have assumed
that $\mathbb{K}$ is either $\mathbb{R}$ or $\mathbb{C}$, 
in order to use tangent spaces and differentials.
Now one can verify that~\eqref{lpyblg} is a parametric Yang--Baxter map for any field~$\mathbb{K}$.

For $\pw=1$ the maps~\er{ybmgl},~\er{pyblg},~\er{lpyblg} were presented in~\cite{BIKRP}.
For $\pw\ge 2$ the maps~\er{ybmgl},~\er{pyblg},~\er{lpyblg} are new.

Using Corollary~\ref{cptyb}, 
from the parametric Yang--Baxter map~\er{pyblg} we obtain the following parametric tetrahedron map
\begin{gather}
\label{tpyblg}
\begin{gathered}
\mathbb{T}_{a,b,c}\colon\mat_n(\mathbb{K})\times\mat_n(\mathbb{K})\times\mat_n(\mathbb{K})
\to\mat_n(\mathbb{K})\times\mat_n(\mathbb{K})\times\mat_n(\mathbb{K}),\\
\mathbb{T}_{a,b,c}(M_1,M_2,M_3)=
\bigg(M_1,\,a^\pw M_2a^{-\pw}+
\sum_{i=0}^{\pw-1}\Big(a^iM_1a^{-i-1}b-ba^iM_1a^{-i-1}\Big),\,M_3\bigg),
\end{gathered}
\end{gather}
with parameters $a,b,c\in\Omega$.

Let $l,m\in\mathbb{K}$, $l\neq 0$.
Applying Proposition~\ref{lin-par-family-abc} to the map~\eqref{tpyblg},
we derive the parametric tetrahedron map
\begin{gather}
\label{tlmm}
\begin{gathered}
\mathbb{T}^{l,m}_{a,b,c}\colon\mat_n(\mathbb{K})\times\mat_n(\mathbb{K})\times\mat_n(\mathbb{K})
\to\mat_n(\mathbb{K})\times\mat_n(\mathbb{K})\times\mat_n(\mathbb{K}),\qquad
a,b,c\in\Omega,\\
\mathbb{T}^{l,m}_{a,b,c}(M_1,M_2,M_3)=
\bigg(lM_1,\,l^{-1}a^\pw M_2a^{-\pw}+
\sum_{i=0}^{\pw-1}\Big(a^iM_1a^{-i-1}b-ba^iM_1a^{-i-1}\Big),\,mM_3\bigg).
\end{gathered}
\end{gather}
The tetrahedron map~\er{tlmm} carries almost the same information as
the Yang--Baxter map~\er{lpyblg}, 
but we present~\er{tlmm} for completeness.

\section{Conclusions}
\label{sconc}

In this paper we have presented a number of results on 
tetrahedron maps and Yang--Baxter maps.

In particular, in Sections~\ref{sltm},~\ref{slptm}
we have clarified the structure of the nonlinear algebraic relations which define 
linear (parametric) tetrahedron maps (with nonlinear dependence on parameters).
Using this result,
in Propositions~\ref{pinvt},~\ref{par-family},~\ref{pipm},~\ref{lin-par-family-abc}
we have presented 
several transformations which allow one to obtain new such maps from known ones.

Furthermore, in Section~\ref{sdybt} we have proved that the differential 
of a (nonlinear) tetrahedron map on a manifold is a tetrahedron map as well.
Similar results on the differentials of Yang--Baxter 
and entwining Yang--Baxter maps are also presented in Section~\ref{sdybt}.

In Remark~\ref{rlappr}, Corollary~\ref{ctaaa}, and Examples~\ref{eeltr},~\ref{edmh} 
we have shown how linear tetrahedron maps appear as linear approximations of nonlinear ones.

Using the obtained general results, we have constructed a number 
of new Yang--Baxter and tetrahedron maps.

Example~\ref{edmh} is devoted to the 
nonlinear tetrahedron map~\er{dmt} from~\cite{Dimakis}.
We have obtained (functionally independent) invariants~\er{invdm},~\er{invdm3} for it.
Furthermore, we have constructed a family of new linear tetrahedron maps~\er{pmat2},
which are linear approximations of the map~\er{dmt}.
The family of maps~\er{pmat2} depends on the parameter~$\ct\in\mathbb{C}$.

In Examples~\ref{eeltr},~\ref{eknpt},
computing the differentials of some tetrahedron maps 
from~\cite{Sergeev,Kashaev-Sergeev,Kassotakis-Tetrahedron},
we have obtained new tetrahedron maps~\er{difelt},~\er{difkm}.

Let $\mathbb{K}$ be a field. 
(For instance, one can take $\mathbb{K}=\mathbb{C}$ or $\mathbb{K}=\mathbb{R}$.)
In Section~\ref{sYBmvb}, for any nonzero $l\in\mathbb{K}$, $\,n,\pw\in\zsp$, 
and any abelian subgroup $\Omega\subset\GL_n(\mathbb{K})$,
we have obtained the parametric Yang--Baxter map~\er{lpyblg} with parameters $a,b\in\Omega$.
For $\pw\ge 2$ the map~\er{lpyblg} is new.
For $\pw=1$ it was presented in~\cite{BIKRP}.

Motivated by the results of this paper,
we suggest the following directions for future research:
\begin{itemize}
\item There are several methods in the literature for associating Yang--Baxter and tetrahedron maps to discrete lattice equations~\cite{Pavlos-Maciej, Pavlos-Maciej-2, Kassotakis-Tetrahedron, pap-Tongas}. It would be interesting to compare discrete integrable systems related 
to given Yang--Baxter and tetrahedron maps with discrete systems associated with the differentials of these maps.
\item Noncommutative versions and extensions of Yang--Baxter and tetrahedron maps have been recently of particular interest (see, e.g., \cite{Doliwa, Doliwa-Kashaev, GKM, Sokor-2020, Sokor-Kouloukas}). In particular, Yang--Baxter maps extended by means of Grassmann algebras 
were obtained in~\cite{GKM, Sokor-Sasha-2}.

One can try to prove that the differentials of Grassmann extented Yang--Baxter 
and tetrahedron maps are also solutions to the Grassmann extended Yang--Baxter and tetrahedron equations. Moreover, we propose 
to compare the relation between Yang--Baxter and tetrahedron maps and their differentials 
versus the relation between the former and the latter in the case of Grassmann algebras.

\item 
It is well known that B\"acklund transformations 
for integrable partial differential, differential-difference 
and difference-difference equations can often be constructed by means of chains of
Miura-type transformations (also called Miura maps).
There is a method to construct Miura-type transformations for differential-difference
equations from Darboux--Lax representations (DLRs) of such equations~\cite{Igon2016}, 
using Lie group actions associated with matrices from DLRs.
The method in~\cite{Igon2016} is applicable to a wide class of DLRs
and can be extended to difference-difference equations.
It is inspired by the results of~\cite{drin-sok85,igon2005} on 
Miura-type transformations for (1+1)-dimensional evolution partial differential equations (PDEs).

On the other hand, there are examples of Yang--Baxter maps~\cite{Sokor-Sasha,KP2019,mpw2016} 
and tetrahedron maps~\cite{KR} arising from matrix refactorisation problems 
for Darboux matrices corresponding to Darboux--B\"acklund transformations 
of Lax representations of integrable (1+1)-dimensional evolution PDEs.
Furthermore, it is well known that such a Darboux--B\"acklund transformation 
very often gives integrable differential-difference and difference-difference equations
(see, e.g.,~\cite{hjn-book,kmw,mpw2016}).
This suggests to study relations between Miura maps, Yang--Baxter maps, and tetrahedron maps.
\end{itemize}

\section*{Acknowledgements} 
We acknowledge support by the Russian Science Foundation (grant No. 20-71-10110).

We would like to thank A.V.~Mikhailov and D.V.~Talalaev for useful discussions.

\end{document}